\definecolor{shadecolor}{rgb}{0.9,0.9,0.9}
\newtheorem{definition}{Definition}
\newtheorem{proposition}{Proposition}
\newtheorem{lemma}[proposition]{Lemma}
\newtheorem{theorem}[proposition]{Theorem}
\def\squareforqed{\hbox{\rlap{$\sqcap$}$\sqcup$}}
\def\qed{\ifmmode\squareforqed\else{\unskip\nobreak\hfil
\penalty50\hskip1em\null\nobreak\hfil\squareforqed
\parfillskip=0pt\finalhyphendemerits=0\endgraf}\fi}
\def\endenv{\ifmmode\;\else{\unskip\nobreak\hfil
\penalty50\hskip1em\null\nobreak\hfil\;
\parfillskip=0pt\finalhyphendemerits=0\endgraf}\fi}
\newenvironment{proof}{\noindent \textbf{{Proof~} }}{\hfill $\blacksquare$}
\newcounter{remark}
\newenvironment{remark}[1][]{\refstepcounter{remark}\par\medskip\noindent%
\textbf{Remark~\theremark #1} }{\medskip}
\newcounter{example}
\mathchardef\ordinarycolon\mathcode`\:
\def\vcentcolon{\mathrel{\mathop\ordinarycolon}}
\newmdenv[skipabove=7pt,
skipbelow=7pt,
backgroundcolor=darkblue!15,
innerleftmargin=5pt,
innerrightmargin=5pt,
innertopmargin=5pt,
leftmargin=0cm,
rightmargin=0cm,
innerbottommargin=5pt,
linewidth=1pt]{tBox}
\newmdenv[skipabove=7pt,
skipbelow=7pt,
backgroundcolor=red!15,
innerleftmargin=5pt,
innerrightmargin=5pt,
innertopmargin=5pt,
leftmargin=0cm,
rightmargin=0cm,
innerbottommargin=5pt,
linewidth=1pt]{rBox}
\newmdenv[skipabove=7pt,
skipbelow=7pt,
backgroundcolor=blue2!25,
innerleftmargin=5pt,
innerrightmargin=5pt,
innertopmargin=5pt,
leftmargin=0cm,
rightmargin=0cm,
innerbottommargin=5pt,
linewidth=1pt]{dBox}
\newmdenv[skipabove=7pt,
skipbelow=7pt,
backgroundcolor=darkkblue!15,
innerleftmargin=5pt,
innerrightmargin=5pt,
innertopmargin=5pt,
leftmargin=0cm,
rightmargin=0cm,
innerbottommargin=5pt,
linewidth=1pt]{sBox}
\definecolor{darkblue}{RGB}{0,76,156}
\definecolor{darkkblue}{RGB}{0,0,153}
\definecolor{blue2}{RGB}{102,178,255}
\definecolor{darkred}{RGB}{195,0,0}
\newcommand{\nc}{\newcommand}
\nc{\rnc}{\renewcommand}
\nc{\lbar}[1]{\overline{#1}}
\nc{\bra}[1]{\langle#1|}
\nc{\ket}[1]{|#1\rangle}
\nc{\ketbra}[2]{|#1\rangle\!\langle#2|}
\nc{\braket}[2]{\langle#1|#2\rangle}
\nc{\proj}[1]{| #1\rangle\!\langle #1 |}
\nc{\avg}[1]{\langle#1\rangle}
\nc{\smfrac}[2]{\mbox{$\frac{#1}{#2}$}}
\nc{\tr}{\operatorname{Tr}}
\nc{\ox}{\otimes}
\nc{\dg}{\dagger}
\nc{\dn}{\downarrow}
\nc{\cA}{{\cal A}}
\nc{\cB}{{\cal B}}
\nc{\cC}{{\cal C}}
\nc{\cD}{{\cal D}}
\nc{\cE}{{\cal E}}
\nc{\cF}{{\cal F}}
\nc{\cG}{{\cal G}}
\nc{\cH}{{\cal H}}
\nc{\cI}{{\cal I}}
\nc{\cJ}{{\cal J}}
\nc{\cK}{{\cal K}}
\nc{\cL}{{\cal L}}
\nc{\cM}{{\cal M}}
\nc{\cN}{{\cal N}}
\nc{\cO}{{\cal O}}
\nc{\cP}{{\cal P}}
\nc{\cQ}{{\cal Q}}
\nc{\cR}{{\cal R}}
\nc{\cS}{{\cal S}}
\nc{\cT}{{\cal T}}
\nc{\cU}{{\cal U}}
\nc{\cV}{{\cal V}}
\nc{\cX}{{\cal X}}
\nc{\cY}{{\cal Y}}
\nc{\cZ}{{\cal Z}}
\nc{\cW}{{\cal W}}
\nc{\csupp}{{\operatorname{csupp}}}
\nc{\qsupp}{{\operatorname{qsupp}}}
\nc{\var}{{\operatorname{var}}}
\nc{\rar}{\rightarrow}
\nc{\lrar}{\longrightarrow}
\nc{\polylog}{{\operatorname{polylog}}}
\nc{\wt}{{\operatorname{wt}}}
\nc{\av}[1]{{\left\langle {#1} \right\rangle}}
\nc{\supp}{{\operatorname{supp}}}
\nc{\argmin}{{\operatorname{argmin}}}
\DeclarePairedDelimiter{\norm}{\lVert}{\rVert}
\DeclarePairedDelimiter{\floor}{\lfloor}{\rfloor}
\def\x{\xi}
\def\o{\omega}
\def\O{\Omega}
\nc{\RR}{{{\mathbb R}}}
\nc{\CC}{{{\mathbb C}}}
\nc{\FF}{{{\mathbb F}}}
\nc{\NN}{{{\mathbb N}}}
\nc{\ZZ}{{{\mathbb Z}}}
\nc{\PP}{{{\mathbb P}}}
\nc{\QQ}{{{\mathbb Q}}}
\nc{\UU}{{{\mathbb U}}}
\nc{\EE}{{{\mathbb E}}}
\nc{\id}{{\operatorname{id}}}
\nc{\CHSH}{{\operatorname{CHSH}}}
\nc{\be}{\begin{equation}}
\nc{\ee}{{\end{equation}}}
\nc{\bea}{\begin{eqnarray}}
\nc{\eea}{\end{eqnarray}}
\nc{\rU}{\mbox{U}}
\nc{\ob}[1]{#1}
\nc{\SEP}{{\text{\rm SEP}}}
\nc{\NS}{{\text{\rm NS}}}
\nc{\LOCC}{{\text{\rm LOCC}}}
\nc{\PPT}{{\text{\rm PPT}}}
\nc{\EXT}{{\text{\rm EXT}}}
\nc{\OLOCC}{{\text{\rm 1-LOCC}}}
\nc{\Sym}{{\operatorname{Sym}}}
\nc{\ERLO}{{E_{\text{r,LO}}}}
\nc{\ERLOCC}{{E_{\text{r,LOCC}}}}
\nc{\ERPPT}{{E_{\text{r,PPT}}}}
\nc{\ERLOCCinfty}{{E^{\infty}_{\text{r,LOCC}}}}
\nc{\Aram}{{\operatorname{\sf A}}}
\newcommand{\eps}{\varepsilon}
\def\grd@save@target#1{%
  \def\grd@target{#1}}
\def\grd@save@start#1{%
  \def\grd@start{#1}}
\tikzset{
  grid with coordinates/.style={
    to path={%
      \pgfextra{%
        \edef\grd@@target{(\tikztotarget)}%
        \tikz@scan@one@point\grd@save@target\grd@@target\relax
        \edef\grd@@start{(\tikztostart)}%
        \tikz@scan@one@point\grd@save@start\grd@@start\relax
        \draw[minor help lines,magenta] (\tikztostart) grid (\tikztotarget);
        \draw[major help lines] (\tikztostart) grid (\tikztotarget);
        \grd@start
        \pgfmathsetmacro{\grd@xa}{\the\pgf@x/1cm}
        \pgfmathsetmacro{\grd@ya}{\the\pgf@y/1cm}
        \grd@target
        \pgfmathsetmacro{\grd@xb}{\the\pgf@x/1cm}
        \pgfmathsetmacro{\grd@yb}{\the\pgf@y/1cm}
        \pgfmathsetmacro{\grd@xc}{\grd@xa + \pgfkeysvalueof{/tikz/grid with coordinates/major step}}
        \pgfmathsetmacro{\grd@yc}{\grd@ya + \pgfkeysvalueof{/tikz/grid with coordinates/major step}}
        \foreach \x in {\grd@xa,\grd@xc,...,\grd@xb}
        \node[anchor=north] at (\x,\grd@ya) {\pgfmathprintnumber{\x}};
        \foreach \y in {\grd@ya,\grd@yc,...,\grd@yb}
        \node[anchor=east] at (\grd@xa,\y) {\pgfmathprintnumber{\y}};
      }
    }
  },
  minor help lines/.style={
    help lines,
    step=\pgfkeysvalueof{/tikz/grid with coordinates/minor step}
  },
  major help lines/.style={
    help lines,
    line width=\pgfkeysvalueof{/tikz/grid with coordinates/major line width},
    step=\pgfkeysvalueof{/tikz/grid with coordinates/major step}
  },
  grid with coordinates/.cd,
  minor step/.initial=.2,
  major step/.initial=1,
  major line width/.initial=2pt,
}
\def\problem@s{}
\newcounter{problems@cnt}
\newcommand{\allproblems}{\problem@s}
\definecolor{colortwo}{rgb}{0.4,0.77,0.17}
\definecolor{colorthree}{rgb}{0.01,0.51,0.93}
\pgfplotsset{compat=1.18}
\newcommand{\op}[1]{\operatorname{ #1 }}
\nc{\cmark}{\ding{51}}
\nc{\xmark}{\ding{55}}
\nc{\HPTTP}{{{{HPTP-PPT}}}}
\nc{\EPPT}{{E_{\operatorname{PPT}}}}
\nc{\EPPTone}{{E_{\operatorname{PPT}}^{(1)}}}
\nc{\EK}{{E_{\kappa}}}
\nc{\EN}{E_N}
\nc{\exact}{{\operatorname{exact}}}
\nc{\PPTq}{{\operatorname{PPTq}}}
\newcommand{\trace}[2][]{\tr_{#1} \left[ #2 \right]}
\nc{\NDG}{{\operatorname{NDG}}}
\begin{document}
\title{Reversible Entanglement Beyond Quantum Operations}
\author{Xin Wang\thanks{felixxinwang@hkust-gz.edu.cn} }

\author{Yu-Ao Chen\thanks{yuaochen@hkust-gz.edu.cn} }

\author{Lei Zhang\thanks{lzhang897@connect.hkust-gz.edu.cn} }

\author{Chenghong Zhu\thanks{czhu854@connect.hkust-gz.edu.cn} }
\affil{\small Thrust of Artificial Intelligence, Information Hub,\par The Hong Kong University of Science and Technology (Guangzhou), Guangdong 511453, China}


\date{\today}
\maketitle

\begin{abstract}
We introduce a reversible theory of exact entanglement manipulation by establishing a necessary and sufficient condition for state transfer under trace-preserving transformations that completely preserve the positivity of partial transpose (PPT). Under these free transformations, we show that logarithmic negativity emerges as the pivotal entanglement measure for determining entangled states' transformations, analogous to the role of entropy in the second law of thermodynamics. Previous results have proven that entanglement is irreversible under quantum operations that completely preserve PPT and leave open the question of reversibility for quantum operations that do not generate entanglement asymptotically. However, we find that going beyond the complete positivity constraint imposed by standard quantum mechanics enables a reversible theory of exact entanglement manipulation, which may suggest a potential incompatibility between the reversibility of entanglement and the fundamental principles of quantum mechanics.
\end{abstract}

\tableofcontents


\newpage
\section{Introduction}
Reversibility is a fundamental concept in many areas of physics, including thermodynamics and quantum mechanics. The second law of thermodynamics governs the direction of heat transfer and the efficiency of energy conversion. With the existence of heat reservoirs, the second law allows for a reversible exchange of work and heat, as exemplified in the Carnot cycle~\cite{Carnot1979}. Based on axiomatic approaches and idealized conditions, it has been shown that entropy is the unique function that determines all transformations between comparable equilibrium states~\cite{Lieb1999,Giles2016}.

Within the realm of quantum information science, reversibility is crucial because it allows for the efficient manipulation of quantum resources. If a process is reversible, it implies that no quantum resources are irretrievably lost during transformations. Through the development of quantum information processing, quantum entanglement has been recognized as an essential resource for various applications, including quantum communication~\cite{Bennett1999}, quantum computation~\cite{Brus2011,Jozsa2003a}, quantum sensing~\cite{Degen2017}, and cryptography~\cite{Ekert1991}. Understanding the reversibility of entanglement is thus pivotal for quantum information and fuels the debate surrounding the axiomatization of entanglement theory. This discourse is largely driven by the parallels drawn with thermodynamics, which have fostered the potential proposition of a single entanglement measure, analogously to entropy, that could potentially govern all entanglement transformations. Such progress in understanding entanglement reversibility would not only mirror thermodynamic properties but also contribute significantly to the axiomatization of entanglement theory.

Reversibility of entanglement pertains to the process of asymptotic entanglement manipulation. For pure quantum states, this process is reversible, indicating that entanglement can be manipulated and then restored to its original state~\cite{Bennett1996b} through local operations and classical communication (LOCC). However, this asymptotic entanglement reversibility does not apply to mixed states~\cite{Vidal2001,Vidal2002b,Vollbrecht2004,Cornelio2011,Yang2005}, meaning that once entanglement is manipulated, it cannot be perfectly restored to its initial state under LOCC. The irreversibility of quantum entanglement under LOCC presents a stark contrast to the principles of thermodynamics~\cite{Horodecki2002}, where certain processes are inherently reversible. This irreversibility also underscores the impossibility of developing a single measure~\cite{Horodecki2003b} capable of governing all entanglement transformations under LOCC, suggesting that a deeper understanding of entanglement manipulation is essential for possible reversibility.

In the pursuit of a reversible entanglement theory, broader classes of operations beyond LOCC can be considered to potentially reduce the gap between entanglement cost and distillable entanglement. However, this approach has yet to yield success as Wang and Duan~\cite{Wang2016d} prove that entanglement is irreversible under quantum operations that completely maintain the positivity of partial transpose (PPT), a meaningful set of quantum operations that includes all LOCC operations. This result~\cite{Wang2016d} also implies the irreversibility in the resource theory of entanglement with non-positive partial transpose (NPT). Furthermore, Lami and Regula~\cite{Lami2023a} show that entanglement theory is irreversible under all non-entangling transformations, which are positive maps that do not produce entanglement. Notably, even for asymptotically entanglement non-generating operations~\cite{Brandao2010}, the reversibility of entanglement is not known~\cite{Berta2023,Fang2021}.
The question of whether a reversible entanglement theory could exist remains a vital open problem in quantum information theory~\cite{M.Plenio}.

In this paper, we introduce a reversible theory of exact entanglement manipulation, showing a possible counterpart of entanglement manipulation to the second law of thermodynamics.
This reversible theory operates under transformations that completely preserve the positivity of partial transpose, which are called PPT quasi-operations (PPTq operations) throughout the paper. 
Our key result is that logarithmic negativity fully determines entangled transformations under PPTq operations, i.e.,
\begin{align}
     \rho \xrightarrow{\PPTq}\sigma \iff E_N(\rho) \ge E_N(\sigma),
\end{align}
which means logarithmic negativity plays an analogous role of entropy in the second laws of thermodynamics.
Based on this necessary and sufficient condition of state transformation (cf. Theorem~\ref{thm:ent transfer}), we prove that the logarithmic negativity determines exact distillable entanglement and exact entanglement cost.
We further show the reversibility of exact entanglement manipulation under PPTq operations (cf. Theorem~\ref{thm:reversible}), showing that
\begin{align}
R_{\PPTq}(\rho_{AB}\to\sigma_{A'B'})=\frac{E_N(\rho_{AB})}{E_{N}(\sigma_{A'B'})},
\end{align}
where $R_{\PPTq}(\rho_{AB}\to\sigma_{A'B'})$ is the asymptotic ratio for exact state transformation under PPTq operations.
We further establish an inequality chain of the entanglement manipulation rates for PPTq operations~(cf.~Theorem~\ref{thm:chain inequality}), presenting a distinction between this reversible theory of entanglement beyond quantum operations and standard entanglement theory.

While our research establishes a reversible theory of exact entanglement manipulation, it's crucial to note that the allowed transformations in this theory are beyond the boundaries of quantum mechanics. But for operations in quantum mechanics, the reversibility under operations that do not generate entanglement asymptotically remains an unresolved enigma in the field~\cite{Berta2023,Fang2021}.
The above reversibility of exact entanglement manipulation under PPT quasi-operations may suggest that the coexistence of entanglement reversibility and quantum mechanics might be mutually incompatible.

\begin{figure}
    \centering
    \includegraphics[width=0.6\linewidth]{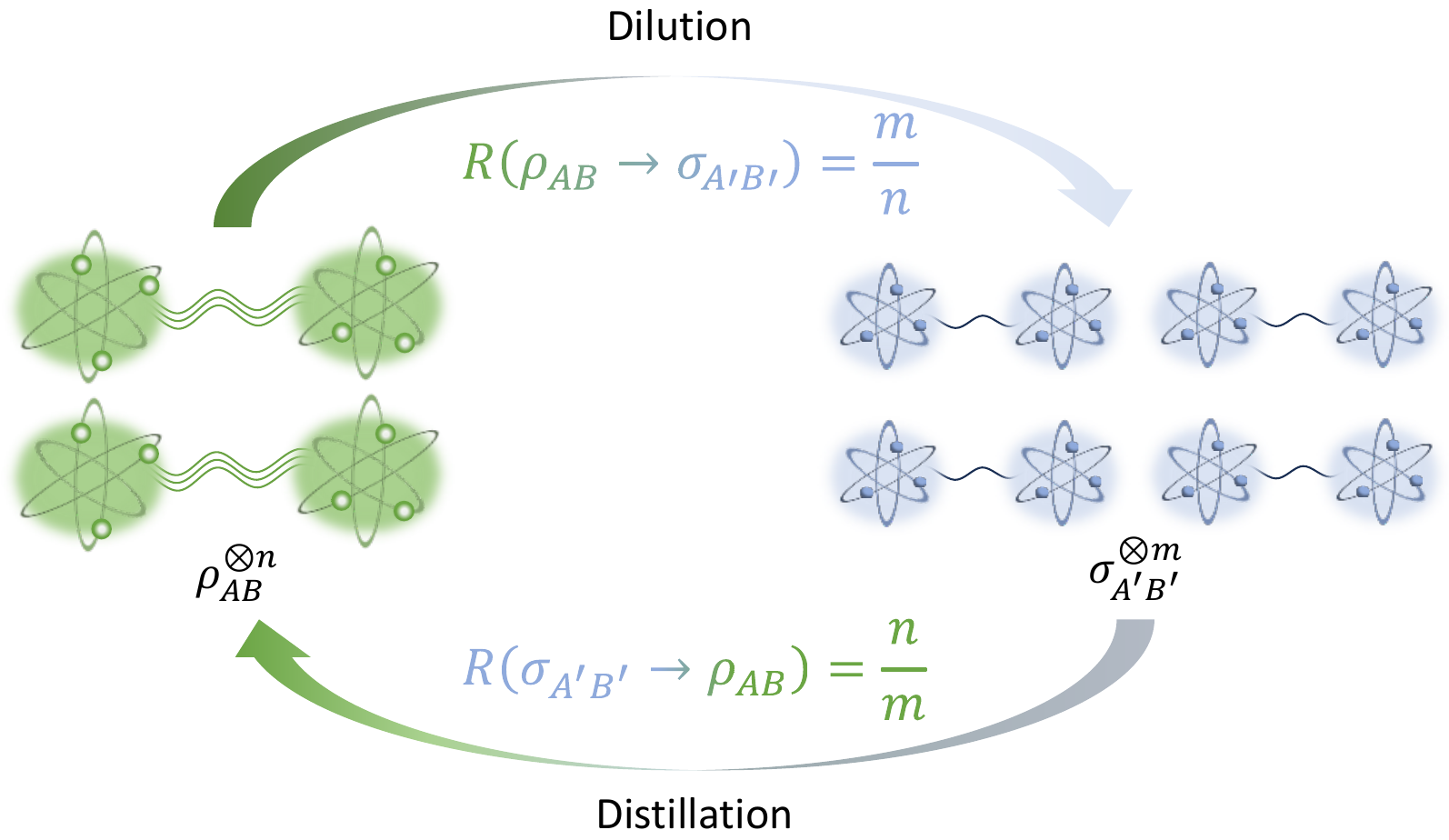}
    \caption{\textbf{Exactly reversible interconvention between quantum states $\rho_{AB}$ and $\sigma_{A'B'}$}. Consider $n$-copies of $\rho_{AB}^{\ox n}$ and $m$-copies of $\sigma_{A'B'}^{\ox m}$, the forward direction indicates the exact conversion from $\rho_{AB}^{\ox n}$ to $\sigma_{A'B'}^{\ox m}$ with rate $R_{\Omega}(\rho_{AB}^{\ox n} \to \sigma_{A'B'}^{\ox m}) = m/n$, while the backward direction shows the exact conversion from $\sigma_{A'B'}^{\ox m}$ to $\rho_{AB}^{\ox n}$ with rate $R_{\Omega}(\sigma_{A'B'}^{\ox m} \to \rho_{AB}^{\ox n}) = n/m$.
    Two states can be exactly interconverted reversibly if $R_{\Omega}(\rho_{AB}^{\ox n}\to \sigma_{A'B'}^{\ox m}) R_{\Omega}(\sigma_{A'B'}^{\ox m}\to \rho_{AB}^{\ox n}) = 1$.}
    \label{fig:enter-label}
\end{figure}
\section{Preliminaries and Entanglement Manipulation}
\subsection{Preliminaries}

\paragraph{Notations.}
We will use the symbols $A$ and $B$ to denote the finite-dimensional Hilbert spaces $\cH_A$ and $\cH_B$ associated with Alice and Bob systems, respectively.  We denote the dimension of $\cH_A$ and $\cH_B$ as $d_A$ and $d_B$.  A quantum state on system $A$ is a positive semidefinite operator $\rho_A$ with trace one.  The trace norm of $\rho$ is denoted as $\|\rho\|_1 = \tr(\sqrt{\rho^{\dagger} \rho})$. 
Let $\{\ket{i}\}_{i = 1}^{d}$ be a standard computational basis, then a standard maximally entangled state of Schmidt rank $d$ is $\Phi^d = 1/d \sum_{i,j = 1}^{d}\ketbra{ii}{jj}$. A Bell state $\Phi^2$ is alternatively called an ebit. A bipartite quantum state $\rho_{AB}$ is said to be a PPT state, if it admits positive partial transpose, i.e., $\rho_{AB}^{T_B} \geq 0$, where $T_B$ denotes the partial transpose on the system B.
Throughout the paper, we take the logarithm to be base two unless stated otherwise.

\paragraph{Properties of linear maps.} Let $\cN$ be a linear map. $\cN$ is Hermitian-preserving (HP) if it maps any Hermitian operator to another Hermitian operator; $\cN$ is positive if it maps any positive semi-definite operator to another positive semi-definite operator, and is called completely positive (CP) if this positivity is preserved on any extended reference system. $\cN$ is also trace-preserving (TP) if it preserves the traces of input operators. 

\paragraph{Properties of bipartite linear maps.} Let $\cN$ be a bipartite linear map. $\cN$ is a local operation and classical communication (LOCC) if it is a composition of a (finite) sequence of quantum instruments. $\cN$ is separability-preserving if it preserves the set of separable states. 
$\cN$ is positivity-of-the-Partial-Transpose-preserving (PPT-preserving) if it preserves the set of PPT states~\cite{Horodecki1996,Peres1996}, and is considered completely PPT-preserving (PPT) if this property holds on any extended reference system. 
We note that free operations beyond LOCC are of importance to advance our understanding of quantum entanglement (see, e.g., \cite{Rains1999,Rains2001,Eggeling2001,Brandao2010,Chitambar2017,Regula2019}).

\begin{shaded}
    \paragraph{Quasi-states.} A quasi-state is a mathematical entity that encapsulates the intricacies of a probabilistic quantum system. Despite the non-physicality in quantum mechanics, the class of quasi-states is the largest possible set that can be represented by physical quantum states under a statistical meaning. Formally, the definition of this class is given as
\begin{equation}\label{eq: def quasi state}
    \widetilde{\mathcal{S}} = \left\{ \sum\nolimits_{j} c_j \rho_j: \rho_j \in \mathcal{S}, c_j \in \RR \textrm{ s.t. } \sum\nolimits_{j} c_j = 1 \right\}
,\end{equation}
    where $\mathcal{S}$ is the set of quantum states. Note that $\widetilde{\mathcal{S}}$ is the set of all Hermitian matrices with trace one.
\end{shaded}

\subsection{Entanglement manipulation}
\paragraph{Quasi-state transformations.} 
Throughout this paper, we allow both state and quasi-state transformations, viewing them as operator transformations under linear maps. Specifically, we call a bipartite quasi-state $\rho$ can be transformed into another bipartite quasi-state $\sigma$ under $\O$ operations, provided there exists a bipartite linear map $\cN\in\O$ such that $\cN(\rho) = \sigma$. This broader framework of quasi-state transformation enables us to define distillable entanglement and entanglement cost also for bipartite quasi-states, similar to how we would for standard quantum states.  This approach offers a fresh perspective on understanding the limits of manipulating quantum entanglement.

\paragraph{Entanglement distillation.}
The maximally entangled state plays a role as the currency in quantum information since it has become a key ingredient in many quantum information processing tasks (e.g., teleportation~\cite{Bennett1993}, superdense coding~\cite{Bennett1992}, and quantum cryptography~\cite{Ekert1991}).
It is important to understand how many maximally entangled states we can obtain from a source of less entangled states using free operations. Imagine that Alice and Bob share a large supply of identically prepared states, and they want to convert these states to high-fidelity Bell pairs.
Let $\Omega$ represent a set of free operations or allowed transformations.
The one-shot zero-error, or exact distillable entanglement, under $\O$ operations of quantum state or quasi-state $\rho_{AB}$ is defined as
$E^{(1)}_{0,D,\O}(\rho_{AB})= \sup_{\Lambda\in \Omega}\left\{\log d:   \Phi^{d}_{\hat{A}\hat{B}}=\Lambda_{AB\to\hat{A}\hat{B}} \left(\rho_{AB}\right)\right\}$,
where  $\Phi^{d}_{\hat{A}\hat{B}}=[1/d]\sum_{i,j=1}^{d}\ketbra{ii}{jj}_{\hat{A}\hat{B}}$ represents the standard maximally entangled state of Schmidt rank~$d$.
The zero-error, or exact distillable entanglement, of a bipartite state or quasi-state state $\rho_{AB}$, under $\Omega$ operations is defined as
\begin{align}
E_{D,\O}^{\exact}(\rho_{AB})= \lim_{n \to \infty} \frac{1}{n}E^{(1)}_{0,D,\O}\left(\rho_{AB}^{\ox n}\right).
\end{align}
For entanglement distillation with  asymptotically vanishing error, the rate is quantified via distillable entanglement. The distillable entanglement of a bipartite state or quasi-state $\rho_{AB}$, under the $\Omega$ operations, is defined as
\begin{align}
E_{D,\O}\left(\rho_{AB}\right)=\sup\left\{r: \lim_{n \to \infty} \left[\inf_{\Lambda\in\O}  \norm{ \Lambda\left(\rho_{AB}^{\ox n}\right)- \Phi^{2rn}_{\hat{A}\hat{B}} }_1 \right]=0\right\}.
\end{align}
As distillable entanglement quantifies the fundamental limit of entanglement distillation and related task of quantum communication, substantial efforts have been made to obtain its accurate estimation and fundamental properties~\cite{Rains2001,Eggeling2001,Horodecki2000a,Christandl2004,Wang2016,Wang2016c,Leditzky2017,Wang2019b,Zhu2023}. As exact entanglement distillation is more restricted, it holds that
\begin{align}\label{eq:measure ED inequality}
E_{D,\Omega}^{\exact} (\rho) \le E_{D,\Omega}(\rho).
\end{align}

\paragraph{Entanglement dilution.}
The reverse task of entanglement distillation is called entanglement dilution.  At this time, Alice and Bob share a large supply of Bell pairs and they are to convert $rn$ Bell pairs to $n$ high fidelity copies of the desired state $\rho^{\ox n}$ using suitable free operations. Let $\Omega$ represent a set of free operations, which for example can be \text{LOCC} or \text{PPT}. 
The one-shot zero-error, or exact entanglement cost, of a bipartite state or quasi-state $\rho_{AB}$ under the $\Omega$ operations is defined as
$E^{(1)}_{0,C,\O}\left(\rho_{AB}\right)= \inf_{\Lambda\in \Omega}\left\{\log d:   \rho_{AB}=\Lambda_{\hat{A}\hat{B}\to AB} \left(\Phi^{d}_{\hat{A}\hat{B}}\right)\right\}$.
The zero-error, or exact entanglement cost, of  $\rho_{AB}$ under the $\Omega$ operations is defined as
\begin{align}
E^{\exact}_{C,\O}\left(\rho_{AB}\right)= \lim_{n \to \infty} \frac{1}{n} E^{(1)}_{0,C,\O}\left(\rho_{AB}^{\ox n}\right).
\end{align}
Previous works~\cite{Terhal2000b,Audenaert2003,Matthews2008,Wang2020c} have shown progresses towards understanding the exact entanglement cost of quantum states. 

For the case of asymptotically vanishing error of entanglement dilution, the rate is quantified via entanglement cost. The concise definition of entanglement cost using $\O$ operations is given as follows:
\begin{align}
E_{C,\Omega} \left(\rho_{AB}\right) = \inf\left\{r: \lim_{n \to \infty} \inf_{\Lambda\in \Omega}  \|\rho_{AB}^{\ox n}-\Lambda \left(\Phi^{2rn}_{\hat{A}\hat{B}}\right)\|_1=0\right\},
\end{align}
When LOCC is free, entanglement cost is given by the regularization of the entanglement of formation~\cite{Hayden2001}, which is shown to be non-additive~\cite{Hastings2009}.
Further efforts have been made to improve understanding of the entanglement cost in specific and general quantum states.~\cite{Christandl2012,Wang2016d,Wilde2018,Wang2023b}. As exact entanglement dilution is more restricted, it holds that
\begin{align}\label{eq:measure EC inequality}
E_{C,\Omega}(\rho) \le E_{C,\Omega}^{\exact}(\rho).
\end{align}

\paragraph{Exact entanglement transformations.}
The ratio of state conversion plays an integral role in the manipulation of quantum resources such as entanglement.
In this paper, we mainly focus on the asymptotic exact state conversion.
It describes the process of converting one state into another exactly as the number of copies approaches infinity under certain free operations.
Let $\Omega$ represent a set of free operations.
The asymptotic conversion ratio of exact entanglement transformation from a state or quasi-state $\rho_{AB}$ to another state or quasi-state  $\sigma_{\hat{A}\hat{B}}$ is defined as
\begin{align}
    R_\Omega \left(\rho_{AB}\to\sigma_{A'B'} \right) = \sup \left\{\frac{m}{n}: \exists n_0: \forall n \geq n_0, \exists \Lambda_n \in  \Omega: \Lambda_n\left(\rho_{A B}^{\otimes n}\right)=\sigma_{A'B'}^{\ox m}\right\}
.\end{align}
Two states can be exactly interconverted reversibly if 
$R_\Omega\left(\rho_{AB}\to\sigma_{A'B'}\right) \times R_\Omega\left(\sigma_{A'B'}\to\rho_{AB}\right) = 1$.

\begin{figure}[t]
    \centering
    \includegraphics[width=0.6\linewidth]{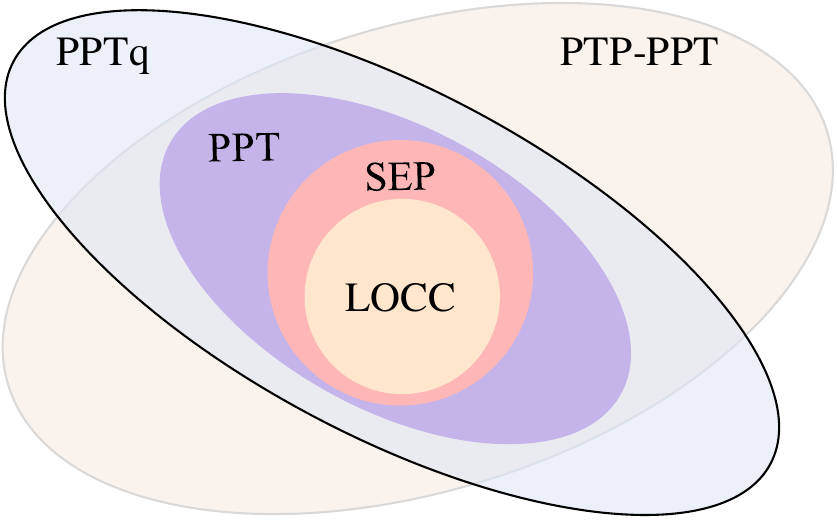}
    \begin{tabular}{cc}
    \toprule
    \midrule
    Class of Operations & Abbreviation \\
    \midrule
    Local Operations and Classical Communication & LOCC \\
    Separable Operations & SEP \\
    Completely PPT-preserving  Operations & PPT \\
    Positive and Trace-preserving and PPT-preserving Maps  & PTP-PPT  \\
    Hermitian-preserving and Trace-preserving PPT Maps & PPTq \\
    \addlinespace
    \midrule
    \bottomrule
    \end{tabular}
    \caption{Schematic hierarchy of operations. The pictured inclusions among $\LOCC$, $\SEP$, $\PPT$, $\textbf{\rm PTP-PPT}$, $\textbf{\rm PPTq}$. The main focus of this paper is entanglement manipulation under PPTq operations.}
    \label{fig:hierarchy}
\end{figure}

\section{Reversibility of Exact Entanglement Manipulation}
\subsection{Necessary and sufficient condition for entanglement transformations}

In this section, our focus is centered in exploring the possibility of reversible entanglement manipulation. To thoroughly characterize the fundamental limits of entanglement transformation, we briefly venture beyond the confines of the quantum realm in the context of this section, focusing on entanglement transformations beyond quantum operations. 
Following the idea of axiomatic approaches, we consider transformations that demand the weakest possible requirements for entanglement manipulation, relaxing the manipulating objects from quantum states to quasi-states~(c.f.~Eq.~\eqref{eq: def quasi state}).

We specifically introduce the PPT quasi-operations (PPTq), which are Hermitian-preserving and trace-preserving maps that completely preserve the positive of partial transpose. The underlying intuition is that PPT states are all bound entanglement~\cite{Horodecki1998}, which are useless in entanglement distillation. Furthermore, as depicted in Figure~\ref{fig:hierarchy}, PPTq operations encompass all SEP, LOCC, and PPT operations, whereas some PTP-PPT operations do not fall within the scope of $\PPTq$.

\begin{shaded}
\begin{definition}
    An HPTP bipartite map $\cN_{AB\to A'B'}$ is called a PPT quasi-operation (PPTq operation) if $T_{B'} \circ \cN_{AB\to A'B'} \circ T_B$ is completely positive.
\end{definition}
\end{shaded}

We are going to firstly establish the necessary and sufficient condition for perfect transformations between quasi-states. It turns out surprising that 
the logarithmic negativity~\cite{Vidal2002,Plenio2005b} is the key to fully characterize the transformations under PPTq operations. We also note that the following theorem directly applies to the more restricted case for transformations between quantum states.

\begin{tcolorbox}[width = 1.0\linewidth]
\begin{theorem}\label{thm:ent transfer}
    For two bipartite states or quasi-states $\rho$ and $\sigma$, there exists $\cN_{AB \to A'B'} \in \operatorname{PPTq}$ such that $\cN(\rho) = \sigma$ if and only if
\begin{equation}
    \EN(\rho) \geq \EN(\sigma),
\end{equation}
where $\EN(\rho) = \log\|\rho^{T_B}\|_1$ is the logarithmic negativity.
\end{theorem}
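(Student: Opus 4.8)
The plan is to prove both directions separately, with the ``only if'' direction following from the fact that logarithmic negativity is an entanglement monotone under PPT-preserving maps (a standard computation), and the ``if'' direction requiring an explicit construction of a PPTq operation. For the ``only if'' direction, suppose $\cN \in \PPTq$ with $\cN(\rho) = \sigma$. Write $\cM = T_{B'} \circ \cN \circ T_B$, which is completely positive and trace-preserving. Then $\sigma^{T_{B'}} = \cM(\rho^{T_B})$, and since $\cM$ is CPTP it is trace-norm non-increasing, so $\|\sigma^{T_{B'}}\|_1 = \|\cM(\rho^{T_B})\|_1 \le \|\rho^{T_B}\|_1$; taking logarithms gives $\EN(\sigma) \le \EN(\rho)$. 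This works verbatim for quasi-states since the argument only uses $\cM$ being a CPTP linear map on Hermitian operators.

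For the ``if'' direction, assume $\EN(\rho) \ge \EN(\sigma)$. The natural strategy is to route through a maximally entangled state: I expect the paper's later results on exact distillable entanglement and exact entanglement cost to be corollaries of exactly such constructions, so the cleanest approach here is to build $\cN$ directly. Decompose $\rho^{T_B}$ and $\sigma^{T_{B'}}$ via their Jordan (Hahn) decompositions into positive and negative parts, $\rho^{T_B} = P_\rho - N_\rho$ with $\|\rho^{T_B}\|_1 = \tr P_\rho + \tr N_\rho$ and $\tr P_\rho - \tr N_\rho = 1$ (similarly for $\sigma$). The idea is to design a CP map $\cM$ sending $\rho^{T_B} \mapsto \sigma^{T_{B'}}$ of the form $\cM(X) = \alpha\, \tr(Q_+ X)\,\omega_+ - \beta\, \tr(Q_- X)\,\omega_-$ where $\omega_\pm$ are fixed states proportional to $P_\sigma, N_\sigma$, and $Q_\pm$ are suitable positive operators extracting the right coefficients; then set $\cN = T_{B'} \circ \cM \circ T_B$. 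To make $\cM$ completely positive one writes it as a difference only through the overall structure being a legitimate CP map — the trick is that $\cM$ should be built as a genuine CP map (e.g.\ a measure-and-prepare channel composed with a rescaling) rather than a formal difference. The slack $\EN(\rho) \ge \EN(\sigma)$, i.e.\ $\|\rho^{T_B}\|_1 \ge \|\sigma^{T_{B'}}\|_1$, is precisely what provides room to absorb the normalization while keeping complete positivity; the trace-preservation constraint $\tr P_\sigma - \tr N_\sigma = 1 = \tr P_\rho - \tr N_\rho$ fixes the remaining freedom.

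The main obstacle I anticipate is constructing $\cM$ so that it is simultaneously (i) completely positive, (ii) trace-preserving after conjugating by partial transposes, and (iii) exactly mapping $\rho^{T_B}$ to $\sigma^{T_{B'}}$, using only the single scalar inequality on trace norms. A clean way to handle this: pick an orthonormal-like pair of ``test'' functionals by using the supports of $P_\rho$ and $N_\rho$; define $\cM(X) = \tr(\Pi_\rho^+ X)\,\tau_+ + \tr(\Pi_\rho^- X)\,\tau_- + (\text{correction terms on the orthogonal complement})$, choosing $\tau_\pm$ as appropriate (possibly non-positive-trace, but realized as honest CP pieces) multiples of $P_\sigma$ and $-N_\sigma$. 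One must verify complete positivity of the total map — here it helps that any Hermitian-preserving map $\Phi$ with $T_{B'}\circ \Phi \circ T_B$ CP is characterized via its Choi matrix having positive partial transpose, so it suffices to exhibit a Choi operator with the right PPT property and the right action, which may be easier than juggling Kraus forms. I would also check the boundary/degenerate cases ($\EN(\sigma) = 0$, i.e.\ $\sigma$ PPT, where a constant ``prepare $\sigma$'' channel trivially works, and $\rho$ itself PPT forcing $\sigma$ PPT) to make sure the construction degrades gracefully.
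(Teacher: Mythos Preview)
Your ``only if'' direction is identical to the paper's: conjugate by partial transposes to obtain the CPTP map $\cM = T_{B'}\circ\cN\circ T_B$, then use trace-norm contractivity of CPTP maps. Your overall strategy for the ``if'' direction is also the same as the paper's: build a CPTP map $\cM$ taking $\rho^{T_B}$ to $\sigma^{T_{B'}}$ as a measure-and-prepare map based on the Jordan decompositions, and then set $\cN = T_{B'}\circ\cM\circ T_B$.

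The gap is in your concrete construction. Your suggestion of putting ``correction terms on the orthogonal complement'' cannot work, because the projections $P_+,P_-$ onto the positive and negative parts of $\rho^{T_B}$ satisfy $P_+ + P_- = I$; there is no orthogonal complement in the input space to exploit. The paper's trick is to create the slack on the \emph{output} side instead: split off a single nonnegative eigenvalue $s_k$ of $\sigma^{T_{B'}}$, write $\sigma^{T_{B'}} = \widetilde S_+ - S_- + s_k\ketbra{k}{k}$, and define
\[
\cM(\omega) \;=\; \frac{\tr(P_+\omega)}{\tr R_+}\,\widetilde S_+ \;+\; \frac{\tr(P_-\omega)}{\tr R_-}\,S_- \;+\; \lambda(\omega)\,\ketbra{k}{k},
\]
where $\lambda(\omega)$ is whatever linear functional is needed to make $\cM$ trace-preserving. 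The hypothesis $\|\rho^{T_B}\|_1 \ge \|\sigma^{T_{B'}}\|_1$, combined with $\tr\rho^{T_B}=\tr\sigma^{T_{B'}}=1$, yields $\tr\widetilde S_+ \le \tr R_+$ and $\tr S_- \le \tr R_-$; this forces $\lambda(\omega)=\tr\!\bigl((I - \tfrac{\tr\widetilde S_+}{\tr R_+}P_+ - \tfrac{\tr S_-}{\tr R_-}P_-)\,\omega\bigr)\ge 0$ for $\omega\ge 0$, so all three summands are CP and $\cM$ is CPTP. Evaluating on $\rho^{T_B}=R_+-R_-$ gives $\cM(\rho^{T_B})=\widetilde S_+ - S_- + s_k\ketbra{k}{k}=\sigma^{T_{B'}}$. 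So your ``dump the excess'' intuition was correct, but the dump must be a sink in the output space rather than a piece supported on an (empty) input orthogonal complement.
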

\end{tcolorbox}
\begin{proof}
($\implies$) Construct a linear map $\cM = T_{B'} \circ \cN \circ T_B$. Since $\cN_{AB \to A'B'} \in \operatorname{PPTq}$, we know that $\cM$ is completely positive and trace-preserving. By construction, it holds that $\cM\left(\rho^{T_B}\right) = \sigma^{T_{B'}}$.
Since any CPTP operation does not increase the trace norm, we immediately have that $\norm{\rho^{T_B}}_1 \geq \norm{\sigma^{T_{B'}}}_1$
and hence $\EN(\rho) \geq \EN(\sigma)$ by the monotonicity of the $\log$ function.

($\impliedby$) The key idea to show this direction is to construct a CPTP map $\cM$ such that $\cM\left(\rho^{T_B}\right) = \sigma^{T_{B'}}$ based on the assumption that $\|\rho^{T_B}\|_1 \ge \|\sigma^{T_{B'}}\|_1$, which could guarantee that
$\cN = T_{B'} \circ \cM \circ T_B$ is a HPTP and PPT map that can successfully transform $\rho$ to $\sigma$. 
Consider the spectral decomposition $\rho^{T_B} = \sum_j r_j \ketbra{j}{j} = R_+ - R_-$ with
$R_+ = \sum_{j: r_j \geq 0} r_j \ketbra{j}{j}$ and $R_- = -\sum_{j: r_j < 0} r_j \ketbra{j}{j}$.
Here we denote the projections for positive and negative parts as 
\begin{equation}
P_+ = \sum_{j: r_j \geq 0} \ketbra{j}{j}, \quad P_- = \sum_{j: r_j < 0} \ketbra{j}{j}, \quad P_+ + P_- = I.
\end{equation}
Without loss of generality, we assume that $\EN(\sigma) > 0$. Let us then denote $\sigma^{T_{B'}} = \sum_n s_n \ketbra{n}{n}$ and choose some $k \in \NN$ such that $s_k \geq 0$. We further could write $\sigma^{T_{B'}} = \widetilde{S}_+ - S_- + s_k \ketbra{k}{k}$ with
\begin{equation}    
    \widetilde{S}_+ = \sum_{n: s_n \geq 0,\, n \neq k} s_n \ketbra{n}{n}, \quad
    S_- = -\sum_{n: s_n < 0} s_n \ketbra{n}{n}.
\end{equation}

We construct the following CPTP map:
\begin{align}\label{eq: the channel M}
    \cM(\o) = \frac{\tr(P_+\o)}{\tr R_+}  \widetilde{S}_+ + \frac{\tr(P_-\o)}{\tr R_-} S_- + \lambda\ketbra{k}{k},
\end{align}
with 
$\lambda = \tr \o - {\tr(P_+\o)\tr \widetilde{S}_+}/{\tr R_+} - {\tr(P_-\o)\tr S_-}/{\tr R_-}$.
This construction guarantees the condition of TP as the trace of the R.H.S. of Eq.~\eqref{eq: the channel M} is equal to $\tr \o$. To see that this map is also CP, we only need to show that $\lambda\ge 0$. Note that the prerequisite $\EN(\rho) \geq \EN(\sigma)$ implies that
$\frac{\tr \widetilde{S}_+}{\tr R_+} \le 1$ and $\frac{\tr S_-}{\tr R_-} \leq 1$, thus we have that
\begin{align}
\lambda \ge \tr \o - \tr(P_+\o) - \tr(P_-\o) = 0.
\end{align}

Finally, we apply the quantum channel $\cM$ in Eq.~\eqref{eq: the channel M} to $\rho^{T_B}$ and obtain 
\begin{align}
    \cM(\rho^{T_B})  =  \cM( R_+ - R_-) 
     =  \widetilde{S}_+ - S_- + s_k\ketbra{k}{k} = \sigma^{T_B}.
\end{align}
\end{proof}

This result implies that logarithmic negativity emerges as the pivotal entanglement measure for determining the transformations between entangled states, which is analogous to the role of entropy in the second law of thermodynamics.

\subsection{Reversibility of exact entanglement manipulation under PPTq operations}

Entanglement distillation and entanglement dilution are two vital operational tasks in entanglement manipulation. Entanglement distillation involves the transformation of a large number of identically and independently distributed (i.i.d.) copies of a certain state into as many ebits as possible. Conversely, entanglement dilution is concerned with the reverse process, turning ebits into as many copies of the original state. These procedures are typically carried out by means of LOCC. Here, we are going to solve the key rates for both exact distillable entanglement and exact entanglement cost under PPTq operations, which are shown to be equal to the logarithmic negativity of the state.

\begin{tcolorbox}[width = 1.0\linewidth]
\begin{theorem}\label{thm:ent measure}
    For any bipartite state or quasi-state $\rho_{AB}$, it holds that
\begin{align}
E_{D,\operatorname{PPTq}}^{\operatorname{exact}}(\rho_{AB})  =E_{C,\operatorname{PPTq}}^{\operatorname{exact}}(\rho_{AB}) = E_N({\rho_{AB}}).
\end{align}
\end{theorem}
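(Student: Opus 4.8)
The plan is to deduce the theorem from the one-shot characterization in Theorem~\ref{thm:ent transfer}, combined with two elementary properties of the logarithmic negativity.

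\emph{Step 1: value of $E_N$ on $\Phi^d$ and additivity.} First I would compute $E_N(\Phi^d) = \log d$. Writing $\Phi^d = \frac1d\sum_{i,j}\ketbra{i}{j}\ox\ketbra{i}{j}$, its partial transpose on $B$ is $\frac1d F$ with $F = \sum_{i,j}\ketbra{ij}{ji}$ the swap operator; since $F$ is a self-adjoint involution with eigenvalues $\pm1$ it satisfies $\|F\|_1 = d^2$, so $\|(\Phi^d)^{T_B}\|_1 = d$ and $E_N(\Phi^d) = \log d$. Second I would record the exact additivity $E_N(\rho^{\ox n}) = n\,E_N(\rho)$, which follows from $(\rho^{\ox n})^{T_{B^n}} = (\rho^{T_B})^{\ox n}$ together with the multiplicativity of the trace norm under tensor products. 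Both facts use only linearity of the partial transpose and properties of a norm, so they hold verbatim for quasi-states.

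\emph{Step 2: the two exact rates.} By Theorem~\ref{thm:ent transfer}, there is a PPTq operation sending $\rho^{\ox n}$ to $\Phi^d$ if and only if $E_N(\rho^{\ox n}) \geq E_N(\Phi^d)$, i.e.\ if and only if $\log d \leq n\,E_N(\rho)$. Hence $E^{(1)}_{0,D,\PPTq}(\rho^{\ox n}) = \log\lfloor 2^{n E_N(\rho)}\rfloor$. Dividing by $n$ and letting $n\to\infty$, the floor is squeezed between $2^{n E_N(\rho) -1}$ and $2^{n E_N(\rho)}$, so $E_{D,\PPTq}^{\exact}(\rho) = E_N(\rho)$ (the case $E_N(\rho)=0$ being immediate). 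Symmetrically, Theorem~\ref{thm:ent transfer} gives a PPTq operation from $\Phi^d$ to $\rho^{\ox n}$ if and only if $\log d \geq n\,E_N(\rho)$, so $E^{(1)}_{0,C,\PPTq}(\rho^{\ox n}) = \log\lceil 2^{n E_N(\rho)}\rceil$, and the analogous sandwich (now between $2^{n E_N(\rho)}$ and $2^{n E_N(\rho)+1}$) yields $E_{C,\PPTq}^{\exact}(\rho) = E_N(\rho)$. Combining the two displays gives the claimed equalities.

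\emph{Expected difficulty.} Since the argument is a direct application of Theorem~\ref{thm:ent transfer}, there is no substantive obstacle; the only points requiring care are the identification of $(\Phi^d)^{T_B}$ as a rescaled swap operator (so that $E_N(\Phi^d) = \log d$ exactly) and the bookkeeping of the integer rounding in the one-shot quantities, which disappears in the regularized limit.
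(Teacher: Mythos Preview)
Your proposal is correct and follows essentially the same route as the paper: invoke Theorem~\ref{thm:ent transfer} together with $E_N(\Phi^d)=\log d$ and the additivity $E_N(\rho^{\ox n})=nE_N(\rho)$ to pin down the one-shot quantities, then regularize. The only cosmetic difference is that you use both directions of the ``if and only if'' in Theorem~\ref{thm:ent transfer} to compute $E^{(1)}_{0,D,\PPTq}(\rho^{\ox n})$ and $E^{(1)}_{0,C,\PPTq}(\rho^{\ox n})$ exactly, whereas the paper only proves the achievability direction for each (so $E_N\le E_{D,\PPTq}^{\exact}$ and $E_{C,\PPTq}^{\exact}\le E_N$) and then closes the chain via the generic inequality $E_{D,\PPTq}^{\exact}\le E_{C,\PPTq}^{\exact}$; your version is marginally more self-contained in that it does not appeal to this last inequality.
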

\end{tcolorbox}
\begin{proof}
The proof of this theorem is divided into two parts. We are going to firstly show $E_N(\rho_{AB}) \le E_{D,\operatorname{PPTq}}^{\operatorname{exact}}(\rho_{AB})$ and then show $E_{C,\operatorname{PPTq}}^{\operatorname{exact}}(\rho_{AB})\le E_N(\rho_{AB})$.

Let us consider $n$-copy of the state $\rho_{AB}$ and try to transform it to as many ebits as possible.
By Theorem~\ref{thm:ent transfer}, we know that there exists one $\Lambda\in\PPTq$ and maximally entangled state $\Phi_{A'B'}^{d}$ with $d = \lfloor 2^{n{E_N(\rho_{AB})}} \rfloor$ such that
\begin{align}
    \Lambda(\rho_{AB}^{\otimes n}) = \Phi_{A'B'}^{d},
\end{align}
since $E_N(\rho_{AB}^{\otimes n}) = nE_N(\rho_{AB}) \ge  \log \lfloor 2^{n{E_N(\rho_{AB})}} \rfloor = \log d = E_N(\Phi_{A'B'}^{d})$.
Thus, by the definition of exact entanglement distillation, we have
\begin{align}
E^{(1)}_{0,D,\PPTq}(\rho_{AB}^{\otimes n}) \ge \log \lfloor 2^{n{E_N(\rho_{AB})}} \rfloor,
\end{align}
which leads to
\begin{align}\label{eq:ED EN}
    E_{D,\PPTq}^{\exact}(\rho_{AB}) &= \lim_{n \to \infty} \frac{1}{n}E^{(1)}_{0,D,\PPTq}(\rho_{AB}^{\otimes n})\ge \lim_{n \to \infty} \frac{1}{n}\log \lfloor 2^{n{E_N(\rho_{AB})}} \rfloor 
     = E_N(\rho_{AB}).
\end{align}
 
For the reverse task of entanglement dilution, 
let us consider $n$-copy of the state $\rho_{AB}$ and try to transform at least ebits as possible to prepare $\rho_{AB}^{\ox n}$.
By Theorem~\ref{thm:ent transfer}, we know that there exists one $\Lambda\in\PPTq$ and maximally entangled state $\Phi_{A'B'}^{d}$ with $d = \lceil 2^{n{E_N(\rho_{AB})}} \rceil$ such that
\begin{align}
    \Lambda(\Phi_{A'B'}^{d}) = \rho_{AB}^{\otimes n},
\end{align}
since $E_N(\rho_{AB}^{\otimes n}) = nE_N(\rho_{AB}) \le  \log \lceil 2^{n{E_N(\rho_{AB})}} \rceil = \log d = E_N(\Phi_{A'B'}^{d})$.
Therefore, by the definition of exact entanglement cost, we have
\begin{align}
E^{(1)}_{0,C,\PPTq}(\rho_{AB}^{\otimes n}) \le \log \lceil 2^{n{E_N(\rho_{AB})}} \rceil,
\end{align}
which leads to
\begin{align}\label{eq:EC EN}
    E_{C,\PPTq}^{\exact}(\rho_{AB}) &= \lim_{n \to \infty} \frac{1}{n}E^{(1)}_{0,C,\PPTq}(\rho_{AB}^{\otimes n})\le \lim_{n \to \infty} \frac{1}{n}\log \lceil 2^{n{E_N(\rho_{AB})}} \rceil 
     = E_N(\rho_{AB}).
\end{align}

Note that exact transformations between $\rho$ and $\Phi^2$ guarantees the inequality
\begin{align}\label{eq:ED EC exact}
E_{D,\PPTq}^{\operatorname{exact}}(\rho) \le E_{C,\PPTq}^{\operatorname{exact}}(\rho). 
\end{align}
Thus, combining Eq.~\eqref{eq:ED EC exact}, Eq.~\eqref{eq:ED EN}  and Eq.~\eqref{eq:EC EN}, we arrive at
\begin{align}
E_{D,\operatorname{PPTq}}^{\operatorname{exact}}(\rho_{AB})  =E_{C,\operatorname{PPTq}}^{\operatorname{exact}}(\rho_{AB}) = E_N({\rho_{AB}}).
\end{align}
\end{proof}

The above result has already implies a collapse of two important entanglement measures. As the logarithmic negativity quantifies both exact entanglement cost and exact distillable entanglement, we want to note that it also explains why logarithmic negativity has been fruitfully used in the theory of quantum entanglement. 
For example, previous works have shown that the logarithmic negativity is an upper
bound to distillable entanglement~\cite{Vidal2002} and possesses an operational interpretation as the exact entanglement cost under PPT operations for certain classes of quantum states~\cite{Audenaert2003,Ishizaka2004a}.

Furthermore, using results on exact entanglement cost and exact distillable entanglement, we are now able to prove the reversibility of asymptotic exact entanglement manipulation under PPTq operations.
\begin{tcolorbox}[width = 1.0\linewidth]
\begin{theorem}\label{thm:reversible}
    For any two bipartite states $\rho_{AB}$ and $\sigma_{A'B'}$, the asymptotic exact entanglement transformation rate is given by
\begin{align}
R_{\operatorname{PPTq}}(\rho_{AB}\to\sigma_{A'B'})=\frac{E_N(\rho_{AB})}{E_{N}(\sigma_{A'B'})},
\end{align}
which implies the reversibility of asymptotic exact entanglement manipulation under $\PPTq$ operations, i.e.,
\begin{align}
    R_{\operatorname{PPTq}}(\rho_{AB}\to\sigma_{A'B'}) \times R_{\operatorname{PPTq}}(\sigma_{A'B'}\to \rho_{AB}) = 1.
\end{align}
\end{theorem}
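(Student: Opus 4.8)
The plan is to read the rate formula directly off Theorem~\ref{thm:ent transfer} once one knows that logarithmic negativity is additive on tensor powers. So the first step is to record that $\EN(\rho^{\ox n}) = n\,\EN(\rho)$: partial transpose acts factorwise, so $(\rho_{AB}^{\ox n})^{T_B} = (\rho_{AB}^{T_B})^{\ox n}$, and the trace norm is multiplicative across tensor factors, whence $\EN(\rho_{AB}^{\ox n}) = \log\norm{\rho_{AB}^{T_B}}_1^{\,n} = n\,\EN(\rho_{AB})$, and likewise for $\sigma_{A'B'}$. This collapses every exact many-copy interconversion question to one inequality between real numbers.

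Next I would prove the two matching bounds. For the converse bound, if some $\Lambda_n \in \PPTq$ realizes $\Lambda_n(\rho_{AB}^{\ox n}) = \sigma_{A'B'}^{\ox m}$, then the forward implication of Theorem~\ref{thm:ent transfer} forces $\EN(\rho_{AB}^{\ox n}) \ge \EN(\sigma_{A'B'}^{\ox m})$, i.e.\ $n\,\EN(\rho_{AB}) \ge m\,\EN(\sigma_{A'B'})$, so $m/n \le \EN(\rho_{AB})/\EN(\sigma_{A'B'})$; taking the supremum gives $R_{\PPTq}(\rho_{AB}\to\sigma_{A'B'}) \le \EN(\rho_{AB})/\EN(\sigma_{A'B'})$. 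For achievability, fix any rate $r < \EN(\rho_{AB})/\EN(\sigma_{A'B'})$ and set $m_n = \lfloor rn \rfloor$; then $n\,\EN(\rho_{AB}) \ge rn\,\EN(\sigma_{A'B'}) \ge m_n\,\EN(\sigma_{A'B'})$ for every $n$, so $\EN(\rho_{AB}^{\ox n}) \ge \EN(\sigma_{A'B'}^{\ox m_n})$, and the converse implication of Theorem~\ref{thm:ent transfer} produces, for each $n$, a map $\Lambda_n \in \PPTq$ with $\Lambda_n(\rho_{AB}^{\ox n}) = \sigma_{A'B'}^{\ox m_n}$. Hence every such $r$ is an achievable rate, and letting $r$ tend to the ratio gives $R_{\PPTq}(\rho_{AB}\to\sigma_{A'B'}) \ge \EN(\rho_{AB})/\EN(\sigma_{A'B'})$. (Achievability could also be obtained by chaining the exact distillation and dilution protocols of Theorem~\ref{thm:ent measure}, using that $\PPTq$ is closed under composition.) Combining the bounds yields the rate formula, and then reversibility is immediate:
\begin{align}
R_{\PPTq}(\rho_{AB}\to\sigma_{A'B'}) \times R_{\PPTq}(\sigma_{A'B'}\to\rho_{AB}) = \frac{\EN(\rho_{AB})}{\EN(\sigma_{A'B'})} \cdot \frac{\EN(\sigma_{A'B'})}{\EN(\rho_{AB})} = 1.
\end{align}

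Since Theorem~\ref{thm:ent transfer} already does the heavy lifting, I do not expect a real obstacle here; the only things that need care are bookkeeping. First, the floor in $m_n = \lfloor rn \rfloor$ must not spoil achievability at small $n$ — it does not, since $m_n \le rn$ keeps the negativity inequality valid for all $n$, so the quantifier $\exists n_0\ \forall n \ge n_0$ in the definition of $R_{\PPTq}$ is satisfied with any $n_0$. Second, the identity quietly assumes both states are NPT, i.e.\ $\EN(\rho_{AB}), \EN(\sigma_{A'B'}) > 0$, so that the ratio and its reciprocal make sense; if the target state were PPT it could be reached at unbounded rate and the product identity would degenerate, so this case should be excluded from the hypotheses.
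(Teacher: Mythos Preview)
Your proof is correct and, in fact, slightly more self-contained than the paper's. The paper does not apply Theorem~\ref{thm:ent transfer} directly to the pair $(\rho^{\ox n},\sigma^{\ox m})$; instead it routes the transformation through the ebit, writing
\[
R_{\PPTq}(\rho\to\sigma)=R_{\PPTq}(\rho\to\Phi^2)\cdot R_{\PPTq}(\Phi^2\to\sigma)
\]
and then invoking Theorem~\ref{thm:ent measure} to identify the two factors with $E_N(\rho)$ and $E_N(\sigma)^{-1}$. This is exactly the ``chaining exact distillation and dilution'' alternative you mention parenthetically. Your direct argument has the advantage that it needs only Theorem~\ref{thm:ent transfer} plus additivity of $E_N$, and it supplies both the converse and the achievability bounds explicitly, whereas the paper's compositional identity for rates is asserted rather than justified (the achievability half follows from closure of $\PPTq$ under composition, but the matching upper bound still tacitly relies on the monotonicity step you spell out). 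The paper's route, on the other hand, foregrounds the operational picture of distill-then-dilute and links the rate formula to the measures $E_{D,\PPTq}^{\exact}$ and $E_{C,\PPTq}^{\exact}$. Your remarks about the floor bookkeeping and the implicit $E_N>0$ hypothesis are also on point; the paper leaves both unsaid.
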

\end{tcolorbox}
\begin{proof}
The key idea of this proof is to use the maximally entangled state for intermediate exchange between states, that is,
\begin{align}
R_\PPTq(\rho_{AB}\to\Phi_{\hat{A}\hat{B}}^2)=E_{D,\PPTq}^{\exact}(\rho_{AB}) = E_N(\rho_{AB}),
\end{align}
\begin{align}
R_\PPTq(\Phi_{\hat{A}\hat{B}}^2\to \sigma_{A'B'})=E_{C,\PPTq}^{\exact}(\sigma_{A'B'})^{-1} = E_N(\sigma_{A'B'})^{-1}.
\end{align}
Therefore, we could obtain the transformation ratio as
\begin{align}
    R_{\operatorname{PPTq}}(\rho_{AB}\to\sigma_{A'B'}) &= R_\PPTq(\rho_{AB}\to\Phi_{\hat{A}\hat{B}}^2)\times R_\PPTq(\Phi_{\hat{A}\hat{B}}^2\to \sigma_{AB})\\
     &= E_N(\rho_{AB})\times E_N(\sigma_{A'B'})^{-1}.
\end{align}
\end{proof}

As we have established the tight connection for exact measures $E_{C,\PPTq}^\exact(\rho)$ and $E_{D,\PPTq}^\exact(\rho)$ in Theorem~\ref{thm:ent measure}, we finally arrive at the inequality chain of the entanglement manipulation rates for $\PPTq$ operations as shown the following theorem.

\begin{tcolorbox}[width = 1.0\linewidth]
\begin{theorem} \label{thm:chain inequality}
    For any bipartite state $\rho$, it holds that
    \begin{align}
        E_N^\tau(\rho) \le E_{C,\PPTq}(\rho) \le E_{C,\PPTq}^\exact(\rho) =
        E_{N}(\rho) = E_{D,\PPTq}^\exact(\rho) = 
        E_{D,\PPTq}(\rho). 
    \end{align}
\end{theorem}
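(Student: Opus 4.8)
The plan is to establish the chain by assembling pieces: the two middle equalities $E_{C,\PPTq}^\exact(\rho) = E_N(\rho) = E_{D,\PPTq}^\exact(\rho)$ are already Theorem~\ref{thm:ent measure}, so nothing new is needed there. The two inner inequalities $E_{C,\PPTq}(\rho) \le E_{C,\PPTq}^\exact(\rho)$ and $E_{D,\PPTq}^\exact(\rho) \le E_{D,\PPTq}(\rho)$ are the generic ``exact is harder than approximate'' bounds recorded as Eq.~\eqref{eq:measure EC inequality} and Eq.~\eqref{eq:measure ED inequality}, instantiated at $\Omega = \PPTq$. So the only genuinely new content is the outermost pair: the lower bound $E_N^\tau(\rho) \le E_{C,\PPTq}(\rho)$ (where $E_N^\tau$ denotes the regularized/asymptotic version of logarithmic negativity, $E_N^\tau(\rho) = \lim_n \tfrac1n E_N(\rho^{\ox n})$, which for $E_N$ additive on tensor powers is just $E_N$ itself — but I should be careful here) and the upper bound $E_{D,\PPTq}(\rho) \le E_N(\rho)$.

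First I would prove $E_{D,\PPTq}(\rho) \le E_N(\rho)$. The argument is the standard weak-converse / monotonicity bound: logarithmic negativity is non-increasing under PPTq maps (this is exactly the forward direction of Theorem~\ref{thm:ent transfer}, since any CPTP map is trace-norm non-increasing), it is asymptotically continuous and normalized on maximally entangled states with $E_N(\Phi^d) = \log d$, and it is additive under tensor products. Given a sequence of PPTq maps $\Lambda_n$ with $\|\Lambda_n(\rho^{\ox n}) - \Phi^{2 r n}\|_1 \to 0$, monotonicity gives $n E_N(\rho) = E_N(\rho^{\ox n}) \ge E_N(\Lambda_n(\rho^{\ox n}))$, and asymptotic continuity transfers this to $E_N(\Phi^{2rn}) = rn$ up to an $o(n)$ correction; dividing by $n$ and taking $n\to\infty$ yields $r \le E_N(\rho)$, hence $E_{D,\PPTq}(\rho) \le E_N(\rho)$. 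I should double check that asymptotic continuity of $E_N$ holds in the form needed (it does — this is in Plenio's work and in~\cite{Vidal2002}) and that it applies even though the output may only be a quasi-state if we allow $\Lambda_n \in \PPTq$; but since the target $\Phi^{2rn}$ is a genuine state and we are comparing trace norms of partial transposes, the bound $\big|\,\log\|X^{T_B}\|_1 - \log\|Y^{T_B}\|_1\,\big|$ controlled by $\|X-Y\|_1$ is elementary and suffices.

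Next, for $E_N^\tau(\rho) \le E_{C,\PPTq}(\rho)$ I would run the matching argument on the dilution side: for PPTq maps $\Lambda_n$ with $\|\rho^{\ox n} - \Lambda_n(\Phi^{2rn})\|_1 \to 0$, monotonicity of $E_N$ gives $rn = E_N(\Phi^{2rn}) \ge E_N(\Lambda_n(\Phi^{2rn}))$, and asymptotic continuity relates the right side to $E_N(\rho^{\ox n}) = n E_N^\tau(\rho)$ up to $o(n)$; dividing and taking the limit gives $r \ge E_N^\tau(\rho)$. Here I would keep the notation $E_N^\tau$ rather than collapsing it to $E_N$, both because the theorem statement uses it and because in the approximate setting it is the regularized quantity that naturally appears from the chain-rule-free monotonicity argument; if $E_N$ is additive (which it is, since $\|\,(\rho\ox\sigma)^{T_B T_{B'}}\|_1 = \|\rho^{T_B}\|_1\|\sigma^{T_{B'}}\|_1$) then $E_N^\tau(\rho) = E_N(\rho)$ and the whole chain in fact collapses to a single value, which is the point worth emphasizing after the proof.

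The main obstacle — really the only subtle point — is the asymptotic continuity step: one must verify that closeness in trace norm of $\Lambda_n(\Phi^{2rn})$ to $\rho^{\ox n}$ (respectively of $\Lambda_n(\rho^{\ox n})$ to $\Phi^{2rn}$) implies the logarithmic negativities differ by $o(n)$, uniformly enough in the growing dimension. Since $\log\|\cdot^{T_B}\|_1$ is not Lipschitz in trace norm on its own (the dimension enters), the clean way is to invoke the known asymptotic continuity estimate for $E_N$, of the form $|E_N(X) - E_N(Y)| \le \epsilon \log(d_A d_B) + g(\epsilon)$ for $\|X-Y\|_1 \le \epsilon$, and then note that the relevant dimension grows only exponentially in $n$ so $\log(\text{dim})$ is $O(n)$ while $\epsilon = \epsilon_n \to 0$, giving the desired $o(n)$. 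I would cite~\cite{Vidal2002,Plenio2005b} for this estimate and keep the write-up short, since once asymptotic continuity is granted, each of the two new inequalities is a two-line computation and the rest of the chain is just quoting Theorem~\ref{thm:ent measure}, Eq.~\eqref{eq:measure ED inequality}, and Eq.~\eqref{eq:measure EC inequality}.
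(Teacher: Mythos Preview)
Your overall plan of assembling the chain from Theorem~\ref{thm:ent measure}, Eq.~\eqref{eq:measure EC inequality}, Eq.~\eqref{eq:measure ED inequality}, plus two endpoint bounds is exactly what the paper does. But there is a genuine gap in the two endpoint arguments, stemming from a misidentification of $E_N^\tau$ and a false continuity claim.

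First, $E_N^\tau$ here is \emph{not} the regularization $\lim_n \tfrac1n E_N(\rho^{\ox n})$. Since $E_N$ is additive on tensor products, that regularization would equal $E_N$ and the chain would collapse --- which the paper explicitly says need not happen (see the Remark after Theorem~\ref{thm:chain inequality}). Rather, $E_N^\tau$ is the \emph{tempered} logarithmic negativity of Lami--Regula~\cite{Lami2023a}, defined via $N_\tau(\rho)=\sup\{\tr[X\rho]:\|X^{T_B}\|_\infty\le 1,\ \|X\|_\infty=\tr[X\rho]\}$. This is generally strictly below $E_N$.

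Second, and more seriously, logarithmic negativity is \emph{not} asymptotically continuous; this is well known (it is discussed in~\cite{Plenio2005b}) and the paper's Remark invokes exactly this fact. There is no estimate of the form $|E_N(X)-E_N(Y)|\le \epsilon\log(d_Ad_B)+g(\epsilon)$. What one does have is the crude bound $\bigl|\,\|X^{T_B}\|_1-\|Y^{T_B}\|_1\,\bigr|\le\|(X-Y)^{T_B}\|_1\le d\,\|X-Y\|_1$. This suffices for the distillation direction, because the target $\Phi^{2\lfloor rn\rfloor}$ has $\|(\Phi^d)^{T_B}\|_1=d$ saturating the dimension factor, so $\|\sigma_n^{T_B}\|_1\ge(1-\epsilon_n)2^{\lfloor rn\rfloor}$ and the bound $E_{D,\PPTq}\le E_N$ follows (this is precisely the paper's Proposition~\ref{prop:ent distill <= en}). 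But it \emph{fails} for the dilution direction: there the target is $\rho^{\ox n}$ with $\|(\rho^{\ox n})^{T_B}\|_1=2^{nE_N(\rho)}$, which is typically exponentially smaller than the ambient dimension $(d_Ad_B)^n$, so the error term $d_n\epsilon_n$ swamps the main term and no useful lower bound on $E_{C,\PPTq}$ results.

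The paper circumvents this by using the tempered negativity, whose defining constraint $\|X\|_\infty=\tr[X\rho]$ yields a dimension-free continuity property $N_\tau(\sigma\mid\rho)\ge(1-\epsilon)N_\tau(\rho)$ whenever $\|\sigma-\rho\|_1\le\epsilon$. Combined with super-multiplicativity $N_\tau(\rho^{\ox n})\ge N_\tau(\rho)^n$ and $N_\tau(\sigma\mid\rho)\le\|\sigma^{T_B}\|_1$, this gives $2^{\lfloor rn\rfloor}\ge(1-\epsilon_n)N_\tau(\rho)^n$ and hence $E_{C,\PPTq}(\rho)\ge E_N^\tau(\rho)$. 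That is the missing ingredient in your proposal.
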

\end{tcolorbox}
\noindent\emph{Sketch of Proof.} Note that $E_{C,\PPTq}(\rho) \leq E_{C,\PPTq}^\exact(\rho)$ and $E_{D,\PPTq}^\exact(\rho) \leq E_{D,\PPTq}(\rho)$ follow by Equation~\eqref{eq:measure EC inequality} and Equation~\eqref{eq:measure ED inequality}, respectively. Since Theorem~\ref{thm:ent measure} bridges the gap between these two exact measures, it is sufficient to prove the two endpoints in the inequality chain. A full proof can be found in
Appendix~\ref{appendix:chain inequality}.

\begin{remark}
In the standard quantum resource theory, the achievable rate of entanglement dilution typically surpasses that of entanglement distillation. However, the introduction of $\PPTq$ operations reverses this kind of operational inequality. The uncanny phenomenon is attributed to the unique property of quasi-operations, which can be decomposed into positive and negative operation components. Such property allows $\PPTq$ operations to ``borrow'' additional entanglement resources from seemingly out of nowhere. Specifically, these components can generate states with extra distillable entanglement. 
Once the desired transformations are accomplished, this borrowed resource can then be effectively ``returned'' by combining the positive and negative components to form the target state. This unique feature allows for the achievable rate of entanglement distillation to surpass that of entanglement dilution, presenting a surprisingly reversal of the usual operational inequality in standard resource theory.
    
Notably, even though the proof demonstrates $E_{C,\PPTq}(\rho)\le E_{D,\PPTq}(\rho)$, the absence of asymptotic continuity in the logarithmic negativity $E_N$ suggests that $E_{C,\PPTq}(\rho)$ does not necessarily equal to $E_N(\rho)$. This observation is indicative of the fact that the resource theory in $\PPTq$ remains compatible with the existing entanglement theories under quantum operations.
\end{remark}

\section{Discussions}
Our work has demonstrated the reversibility of exact entanglement transformations under PPTq operations. This reversibility establishes a parallel between entanglement manipulation and the second law of thermodynamics, particularly when operating under idealized conditions. The logarithmic negativity
is the key entanglement measure in this reversible entanglement theory to determine the exact transformation between entangled states, serving a role analogous to entropy in the realm of thermodynamics. 
The advent of a reversible theory of exact entanglement manipulation under PPTq operations paves the way for further exploration into the smallest subset of quantum operations or maps nestled within the set of PPTq operations that could guarantee the reversibility of asymptotic entanglement manipulation.  Our work also opens a possible avenue to study quantum resources beyond free quantum operations.

\begin{table}[htbp]
\centering
\resizebox{\textwidth}{!}{
\begin{tabular}{lcccc}
\toprule
\midrule
Paradigm & Class of Operations & Free Resource & Resource & Reversible? \\
\midrule
 Thermodynamics & - & Heat & Work & \cmark \\
 Coherence~\cite{Winter2016} & Incoherent & Incoherent States & Coherent States & \xmark \\
 Coherence~\cite{Brandao2015} & Maximally Incoherent & Incoherent States & Coherent States & \cmark \\
 Entanglement~\cite{Vidal2001} & LOCC & Separable States & Entangled States & \xmark \\
 NPT Entanglement~\cite{Wang2016d} & PPT & PPT States & NPT States & \xmark \\
 Entanglement~\cite{Lami2023a} & Non-Entangling  & Separable States & Entangled States & \xmark \\
 NPT Entanglement (\textbf{This Work}) & PPTq & Quasi-states w. zero $E_N$ & Quasi-states w. positive $E_N$ & \cmark \\
\addlinespace
\midrule
\bottomrule
\end{tabular}}
\caption{Mainstream resource theories.  This work presents a reversible theory of exact entanglement manipulation under PPTq operations, where asymptotic exact entanglement transformation is reversible.}
\label{tab:paradigm comp} 
\end{table}

While our work delineates a reversible theory of exact entanglement manipulation, it is imperative to remark that these transformations ostensibly transcend the standard quantum mechanics as the allowed operations extend beyond quantum operations. Meanwhile, within the domain of quantum mechanics, the reversibility of entanglement under quantum operations that do not asymptotically generate entanglement (ANE) remains an open question~\cite{Berta2023,Fang2021} due to a gap in the proof of the generalised quantum
Stein’s lemma. The reversibility of exact entanglement manipulation along with the uncertainty of reversibility under ANE quantum operations may suggest a potential incompatibility between the foundational principles of quantum mechanics and the reversibility of quantum entanglement.  
In addition, as a recent work~\cite{Regula2023} shows that the reversibility of quantum resources could happen when relaxing to probabilistic transformations, it will also be interesting to study the interplay between reversibility, success probability, and positivity of allowed transformations of quantum resources.

\section*{Acknowledgements}
We would like to thank Chengkai Zhu for helpful discussions and thank Benchi Zhao, Xuanqiang Zhao, Mingrui Jing for their useful comments. 
This work has been supported by the Start-up Fund from The Hong Kong University of Science and Technology (Guangzhou).  

\small
\bibliographystyle{alpha}
\bibliography{Ref}

\newpage
\appendix
\setcounter{subsection}{0}
\setcounter{table}{0}
\setcounter{figure}{0}

\vspace{3cm}

\begin{center}
\Large{\textbf{Appendix for Reversible Entanglement Beyond Quantum Operations} \\ \textbf{
}}
\end{center}

\renewcommand{\theequation}{S\arabic{equation}}
\renewcommand{\thesubsection}{\normalsize{Supplementary Note \arabic{subsection}}}
\renewcommand{\theproposition}{S\arabic{proposition}}
\renewcommand{\thedefinition}{S\arabic{definition}}
\renewcommand{\thefigure}{S\arabic{figure}}
\setcounter{equation}{0}
\setcounter{table}{0}
\setcounter{section}{0}
\setcounter{proposition}{0}
\setcounter{definition}{0}
\setcounter{figure}{0}


\section{Appendix: Inequality chain for entanglement measures under PPTq}\label{appendix:chain inequality}
In the appendix, we will rigorously demonstrate how the inequality chain in Theorem~\ref{thm:chain inequality} 
\begin{equation}
    E_N^\tau(\rho) \le E_{C,\PPTq}(\rho) \le E_{C,\PPTq}^\exact(\rho) =
    E_{N}(\rho) = E_{D,\PPTq}^\exact(\rho) = 
    E_{D,\PPTq}(\rho).
\end{equation}
is build. 
We start from the construction of two endpoints in the chain, i.e.\ prove $E_N^\tau(\rho) \leq E_{C,\PPTq}(\rho_{AB})$ (Proposition~\ref{prop:temp lowerbound}) and $E_{D,\PPTq}(\rho_{AB}) \leq E_N(\rho_{AB})$ (Proposition~\ref{prop:ent distill <= en}).
For the temper logarithm negativity part, we first need to extend the definition of tempered negativity in Ref.~\cite{Lami2023a} to fit in the regime of quasi-states.

\begin{definition}
    Let $\sigma_{AB}, \rho_{AB}$ be two quasi-states. The tempered negativity between $\sigma_{AB}$ and $\rho_{AB}$ is defined as
\begin{equation}
    N_{\tau} \left(\sigma_{AB} \mid \rho_{AB} \right) = \sup\{ \trace{X\sigma_{AB}}  : \norm{ X^{T_B} }_\infty \leq 1 , \| X\|_\infty = \trace{ X\rho_{AB} } \}
.\end{equation}
    Further, the tempered negativity of $\rho_{AB}$ is denoted by
\begin{equation}
    N_{\tau} \left(\rho_{AB}\right) = N_{\tau} \left(\rho_{AB} \mid \rho_{AB} \right)
.\end{equation}
\end{definition}
Here, we restate three properties of $N_\tau$, and show that the original proof in Ref.~\cite{Lami2023a} remains true after the extension. 
Then we are ready to give the proposition of inequality.

\begin{lemma} \label{lem:temp properties}
    For any quasi-state $\omega_{AB}$ and state $\rho_{AB}$,
\begin{enumerate}
    \item [$\mathbf{(a)}$] $\norm{ \sigma_{AB}^{T_B} }_1 \geq N_\tau(\sigma_{AB} \mid \rho_{AB})$,
    \item [$\mathbf{(b)}$] $\norm{\sigma_{AB} - \rho_{AB}}_1 \leq \eps \implies N_\tau\left( \sigma_{AB} \mid \rho_{AB} \right) \geq (1 -  \eps) N_\tau \left( \rho_{AB} \right)$, and
    \item [$\mathbf{(c)}$] $N_\tau\left( \rho_{AB}^{\ox n} \right) \geq N_\tau\left( \rho_{AB} \right)^n$.
\end{enumerate}
\end{lemma}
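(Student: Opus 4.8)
The plan is to prove each of the three properties $\mathbf{(a)}$, $\mathbf{(b)}$, $\mathbf{(c)}$ in turn, closely mirroring the argument of Lami--Regula~\cite{Lami2023a} and checking that nothing in their reasoning used positive semidefiniteness of $\sigma_{AB}$ rather than just Hermiticity with trace one. Throughout, the feasible set for $X$ in the definition of $N_\tau(\sigma_{AB}\mid\rho_{AB})$ only involves $\|X^{T_B}\|_\infty \le 1$ and the equality constraint $\|X\|_\infty = \tr(X\rho_{AB})$, neither of which depends on positivity of $\sigma$, so the optimization is well posed for quasi-states as well.

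For $\mathbf{(a)}$: take any feasible $X$. By H\"older's inequality, $\tr(X\sigma_{AB}) = \tr\bigl(X^{T_B}\sigma_{AB}^{T_B}\bigr) \le \|X^{T_B}\|_\infty \,\|\sigma_{AB}^{T_B}\|_1 \le \|\sigma_{AB}^{T_B}\|_1$, using that partial transpose preserves trace of a product and that $\|X^{T_B}\|_\infty\le 1$. Taking the supremum over feasible $X$ gives $\mathbf{(a)}$. This step is routine and does not interact with the quasi-state extension at all.

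For $\mathbf{(b)}$: let $X^\star$ be a near-optimal feasible point for $N_\tau(\rho_{AB}) = N_\tau(\rho_{AB}\mid\rho_{AB})$, so $\|X^{\star\,T_B}\|_\infty \le 1$ and $\|X^\star\|_\infty = \tr(X^\star\rho_{AB})$. The point is that $X^\star$ is \emph{also} feasible for the program defining $N_\tau(\sigma_{AB}\mid\rho_{AB})$ — the constraints only reference $\rho_{AB}$, not $\sigma_{AB}$ — so $N_\tau(\sigma_{AB}\mid\rho_{AB}) \ge \tr(X^\star\sigma_{AB}) = \tr(X^\star\rho_{AB}) + \tr\bigl(X^\star(\sigma_{AB}-\rho_{AB})\bigr) \ge \tr(X^\star\rho_{AB}) - \|X^\star\|_\infty\|\sigma_{AB}-\rho_{AB}\|_1 \ge (1-\eps)\tr(X^\star\rho_{AB}) = (1-\eps)N_\tau(\rho_{AB})$, where in the last line we used $\|X^\star\|_\infty = \tr(X^\star\rho_{AB})$ and $\|\sigma_{AB}-\rho_{AB}\|_1\le\eps$. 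One should check $\tr(X^\star\rho_{AB})\ge 0$ (it equals $\|X^\star\|_\infty\ge 0$) so the last inequality goes the right way; then take the supremum over $X^\star$. Again nothing here is disturbed by allowing $\sigma_{AB}$ to be a quasi-state.

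For $\mathbf{(c)}$: this is the supermultiplicativity under tensor powers and is the step I expect to require the most care, though the mechanism is standard — given near-optimal feasible $X$ for $N_\tau(\rho_{AB})$, the tensor power $X^{\otimes n}$ is feasible for $N_\tau(\rho_{AB}^{\otimes n})$ because $\|(X^{\otimes n})^{T_{B^n}}\|_\infty = \|X^{T_B}\|_\infty^n \le 1$, $\|X^{\otimes n}\|_\infty = \|X\|_\infty^n$, and $\tr(X^{\otimes n}\rho_{AB}^{\otimes n}) = \tr(X\rho_{AB})^n = \|X\|_\infty^n$, so the equality constraint is met; hence $N_\tau(\rho_{AB}^{\otimes n}) \ge \tr(X^{\otimes n}\rho_{AB}^{\otimes n}) = \tr(X\rho_{AB})^n$, and taking the supremum over $X$ yields $N_\tau(\rho_{AB}^{\otimes n}) \ge N_\tau(\rho_{AB})^n$. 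The only subtlety is sign bookkeeping: $\tr(X\rho_{AB})=\|X\|_\infty\ge 0$, so raising to the $n$-th power preserves the inequality direction and the supremum passes through cleanly. The main obstacle across all three parts is not any single inequality but the verification that the Lami--Regula arguments genuinely never invoked $\sigma_{AB}\ge 0$ — I would state explicitly at the outset that $\rho_{AB}$ is required to be a genuine state (as in the lemma hypothesis) so that quantities like $\tr(X\rho_{AB})$ and the operator-norm equality constraint behave as in the original, and that only the ``probed'' argument $\sigma_{AB}$ (or $\omega_{AB}$) is relaxed to a quasi-state.
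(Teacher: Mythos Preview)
Your proposal is correct and follows essentially the same route as the paper: the paper's proof is extremely terse, simply noting that (a) is the trace-norm/operator-norm duality, that (b) reduces to H\"older's inequality (valid for arbitrary matrices, so the quasi-state extension is harmless) exactly as in Lami--Regula, and that (c) is unchanged from Lami--Regula since $\rho_{AB}$ is a genuine state. Your write-up unpacks these references into the explicit feasibility and H\"older arguments, but the underlying ideas and the key observation that only $\sigma_{AB}$ (never $\rho_{AB}$) is relaxed to a quasi-state are identical.
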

\begin{proof}
    $\mathbf{(a)}$ This property follows by the fact that
    $\norm{ \sigma_{AB} }_1 = \sup\{ \trace{ X \sigma_{AB} }  : \norm{ X^{T_B} }_\infty \leq 1 \}$.

    $\mathbf{(b)}$ Note that the H\"{o}lder's inequality holds for arbitrary complex matrices. Hence, this properties still holds by Equation (S46) in Ref.~\cite{Lami2023a}.

    $\mathbf{(c)}$ Since $\rho_{AB}$ is a quantum state, this property holds from the same reasoning in Ref.~\cite{Lami2023a}.
\end{proof}

\begin{proposition} \label{prop:temp lowerbound}
    For any bipartite state $\rho_{AB}$, it holds that
    \begin{align}
        E_{C,\PPTq}(\rho_{AB}) \geq E_N^\tau(\rho)
    .\end{align}
\end{proposition}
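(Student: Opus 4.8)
The plan is to lower-bound the exact-error-free entanglement cost $E_{C,\PPTq}^\exact$ — no, rather the \emph{approximate} cost $E_{C,\PPTq}$ — by the regularized tempered logarithmic negativity $E_N^\tau(\rho) = \lim_n \tfrac1n \log N_\tau(\rho^{\ox n})$, following the strategy of Lami--Regula~\cite{Lami2023a} but adapted to the $\PPTq$ setting. First I would unwind the definition of $E_{C,\PPTq}(\rho_{AB})$: for any achievable rate $r$ there is a sequence of maps $\Lambda_n \in \PPTq$ with $\|\rho_{AB}^{\ox n} - \Lambda_n(\Phi^{2^{rn}})\|_1 \to 0$. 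Write $\sigma_n = \Lambda_n(\Phi^{2^{rn}})$, so $\|\rho^{\ox n} - \sigma_n\|_1 =: \eps_n \to 0$.

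The core of the argument is a monotonicity/normalization estimate on $N_\tau$ under $\PPTq$ maps applied to the maximally entangled state. Concretely, I would show that for $\Lambda \in \PPTq$, $N_\tau\bigl(\Lambda(\Phi^d)\bigr) \le d$ — this is the analogue of the bound that drives the $\PPT$-cost lower bound in~\cite{Lami2023a}, and it should follow from combining: (i) Lemma~\ref{lem:temp properties}$\mathbf{(a)}$, i.e.\ $N_\tau(\sigma_n \mid \rho^{\ox n}) \le \|\sigma_n^{T_B}\|_1 = 2^{E_N(\sigma_n)}$; (ii) the fact that $\Lambda \in \PPTq$ means $\mathcal M := T_{B'} \circ \Lambda \circ T_B$ is CPTP, so $\|\sigma_n^{T_{B'}}\|_1 = \|\mathcal M((\Phi^d)^{T_B})\|_1 \le \|(\Phi^d)^{T_B}\|_1 = d$, hence $E_N(\sigma_n) \le \log d = rn$. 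Then Lemma~\ref{lem:temp properties}$\mathbf{(b)}$ converts closeness to $\rho^{\ox n}$ into a bound on the tempered negativity of $\rho^{\ox n}$ itself: $N_\tau(\rho^{\ox n}) \le \tfrac{1}{1-\eps_n}\, N_\tau(\sigma_n \mid \rho^{\ox n}) \le \tfrac{1}{1-\eps_n}\, 2^{rn}$. Taking $\log$, dividing by $n$, and letting $n \to \infty$ gives $E_N^\tau(\rho) = \lim_n \tfrac1n \log N_\tau(\rho^{\ox n}) \le r$; taking the infimum over achievable $r$ yields $E_N^\tau(\rho) \le E_{C,\PPTq}(\rho_{AB})$. (Here I am using that $E_N^\tau$ is \emph{defined} as the regularized $\log N_\tau$, and that the limit exists by super-multiplicativity, Lemma~\ref{lem:temp properties}$\mathbf{(c)}$; if a $\liminf$ is all that is needed the argument still closes.)

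One subtlety I would be careful about: the dilution protocol feeds in a \emph{pure} maximally entangled state $\Phi^d$, which is a genuine quantum state, so Lemma~\ref{lem:temp properties}$\mathbf{(b)}$--$\mathbf{(c)}$ (which are stated for a genuine state $\rho_{AB}$, or require $\rho_{AB}$ to be a state in part $\mathbf{(c)}$) apply without worrying about the quasi-state extension — the quasi-state generality is only needed for the intermediate object $\sigma_n$, and part $\mathbf{(a)}$ already covers quasi-states. I should double-check that $N_\tau(\cdot \mid \cdot)$ behaves correctly when the first argument is a quasi-state but the second is the genuine state $\rho^{\ox n}$, which is exactly the regime of Lemma~\ref{lem:temp properties}$\mathbf{(b)}$.

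The main obstacle I anticipate is establishing the normalization bound $E_N(\Lambda(\Phi^d)) \le \log d$ cleanly from the $\PPTq$ constraint — i.e.\ making precise that $\|(\Phi^d)^{T_B}\|_1 = d$ and that the induced map $\mathcal M$ is trace-norm non-increasing (which is just the $\PPTq$ definition plus contractivity of CPTP maps in trace norm, already used in the proof of Theorem~\ref{thm:ent transfer}). Once that is in hand, the rest is bookkeeping with the three properties of $N_\tau$ from Lemma~\ref{lem:temp properties}. A secondary point to verify is that nothing in Lemma~\ref{lem:temp properties}$\mathbf{(b)}$'s proof in~\cite{Lami2023a} (their Eq.~(S46), a H\"older estimate) breaks when $\sigma_n$ is a quasi-state rather than a state — but the excerpt already asserts this, so I would simply cite it.
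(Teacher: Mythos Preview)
Your proof is correct and follows essentially the same route as the paper's: set $\sigma_n=\Lambda_n(\Phi^{2^{\lfloor rn\rfloor}})$, use the $\PPTq$ structure (equivalently, the forward direction of Theorem~\ref{thm:ent transfer}) to bound $\|\sigma_n^{T_B}\|_1\le 2^{\lfloor rn\rfloor}$, and then chain Lemma~\ref{lem:temp properties}$\mathbf{(a)}$--$\mathbf{(c)}$ before passing to the limit. The only cosmetic difference is that the paper invokes $\mathbf{(c)}$ explicitly in the chain (working with $E_N^\tau(\rho)=\log N_\tau(\rho)$ rather than its regularization), whereas you absorb that step into the definition of $E_N^\tau$; either way the argument closes.
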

\begin{proof}
    This proof mainly follows from Lemma~\ref{lem:temp properties} and the idea of chain inequalities in Ref.~\cite{Lami2023a}. For $r = E_{C,\PPTq}(\rho_{AB})$, there exists a sequence of PPT quasi-operations $\{ \Lambda_{n} \}_n$ such that
\begin{equation}
    \sigma_{n, \hat{A}\hat{B}} = \Lambda_{n} \left( \Phi_{\hat{A}\hat{B}}^{2 \floor{rn}} \right),
    \quad \eps_n = \norm{ \sigma_{n, \hat{A}\hat{B}} - \rho_{AB}^{\ox n} }_1 \textrm{ and } \lim_{n \to \infty} \eps_n = 0
.\end{equation}
    Then for all $n$, Theorem~\ref{thm:ent transfer} implies $E_N \left( \Phi_{\hat{A}\hat{B}}^{2 \floor{rn}} \right) \geq E_N \left( \sigma_{n, \hat{A}\hat{B}} \right)$ and hence
\begin{equation}
\begin{aligned}
    2^{\floor{rn}}  
    &\geq \norm{ \left( \sigma_{n, \hat{A}\hat{B}} \right)^{T_B} }_1 \\
    &\overset{\mathbf{(a)}}{\geq} N_\tau \left( \sigma_{n, \hat{A}\hat{B}} \mid \rho_{AB}^{\ox n} \right)
    \overset{\mathbf{(b)}}{\geq} \left( 1 - \eps_n \right) N_\tau \left( \rho_{AB}^{\ox n} \right) 
    \overset{\mathbf{(c)}}{\geq} \left( 1 - \eps_n \right) N_\tau \left( \rho_{AB} \right)^{n}
.\end{aligned}
\end{equation}
Taking $n \to \infty$ on both sides gives $r = \lim_{n \to \infty} \floor{rn} / n \geq E_N^\tau \left( \rho_{AB} \right)$.
\end{proof}

For the entanglement distillation part, the proposition can be given as follows.

\begin{proposition} \label{prop:ent distill <= en}
    For any state or quasi state $\rho_{AB}$, it holds that
    \begin{align}
        E_{D,\PPTq}(\rho_{AB}) \leq E_N(\rho_{AB}).
    \end{align}
\end{proposition}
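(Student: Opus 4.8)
The plan is to bound the distillable entanglement by the logarithmic negativity using the standard "monotone + asymptotic continuity of the target" argument, adapted to the fact that $E_N$ is not asymptotically continuous but is additive and behaves well on maximally entangled states. First I would observe that $E_N$ is monotone under PPTq operations: this is immediate from the $(\implies)$ direction of Theorem~\ref{thm:ent transfer}, since if $\cN \in \PPTq$ then $T_{B'} \circ \cN \circ T_B$ is CPTP and CPTP maps do not increase the trace norm, so $\|\cN(\omega)^{T_{B'}}\|_1 \le \|\omega^{T_B}\|_1$ for every Hermitian $\omega$ (in particular for non-i.i.d.\ inputs). I would also record that $E_N$ is additive on tensor products, $E_N(\rho^{\ox n}) = n E_N(\rho)$, and that $E_N(\Phi^{2rn}) = rn$.

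Next I would take an arbitrary achievable rate $r$ for distillation: there is a sequence $\Lambda_n \in \PPTq$ with $\eps_n := \|\Lambda_n(\rho_{AB}^{\ox n}) - \Phi^{2\lfloor rn\rfloor}_{\hat A\hat B}\|_1 \to 0$. Writing $\omega_n = \Lambda_n(\rho_{AB}^{\ox n})$, monotonicity gives $n E_N(\rho_{AB}) = E_N(\rho_{AB}^{\ox n}) \ge E_N(\omega_n) = \log\|\omega_n^{T_B}\|_1$. The remaining task is to show $\tfrac1n\log\|\omega_n^{T_B}\|_1 \to r$ (or at least $\ge r - o(1)$). Here I would use that $\|\cdot^{T_B}\|_1$ is continuous in trace norm with an explicit dimension-independent-in-the-right-way bound: since the partial transpose preserves trace norm up to the ambient dimension only crudely, the cleaner route is to lower-bound $\|\omega_n^{T_B}\|_1$ directly by testing against the observable $X = (\Phi^{2\lfloor rn\rfloor})^{T_B}\cdot 2^{\lfloor rn \rfloor}$ — more precisely, use the variational form $\|\omega_n^{T_B}\|_1 = \sup_{\|Y\|_\infty \le 1} \tr(Y \omega_n^{T_B}) = \sup_{\|Y^{T_B}\|_\infty \le 1}\tr(Y\omega_n)$, and pick $Y$ with $Y^{T_B}$ the swap-type operator that witnesses $\|(\Phi^{2\lfloor rn\rfloor})^{T_B}\|_1 = 2^{\lfloor rn\rfloor}$; then $\tr(Y\omega_n) \ge \tr(Y\Phi^{2\lfloor rn\rfloor}) - \|Y\|_\infty \eps_n \ge 2^{\lfloor rn\rfloor}(1-\eps_n)$ after rescaling. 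This yields $n E_N(\rho_{AB}) \ge \lfloor rn\rfloor + \log(1-\eps_n)$, and dividing by $n$ and letting $n\to\infty$ gives $E_N(\rho_{AB}) \ge r$. Taking the supremum over achievable $r$ finishes the proof.

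The main obstacle is handling the passage from the trace-norm closeness $\eps_n \to 0$ to a bound on $\|\omega_n^{T_B}\|_1$ without incurring a dimension-dependent factor that would blow up after dividing by $n$: a naive "$\|X^{T_B}\|_1 \le d\|X\|_1$" estimate is useless here. The fix, as sketched above, is to avoid bounding $\|\omega_n^{T_B}\|_1$ from $\|\omega_n\|_1$ and instead lower-bound it directly via a well-chosen witness coming from the structure of the maximally entangled target, mirroring exactly the role played by the tempered negativity properties $\mathbf{(a)}$–$\mathbf{(c)}$ in Proposition~\ref{prop:temp lowerbound} (indeed one could alternatively phrase this half using $N_\tau$ as well). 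Everything else—monotonicity and additivity of $E_N$—is routine given Theorem~\ref{thm:ent transfer}.
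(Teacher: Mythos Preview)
Your argument is correct and follows the paper's strategy: monotonicity of $E_N$ under $\PPTq$ (from the forward direction of Theorem~\ref{thm:ent transfer}), additivity, and a continuity estimate at the maximally entangled target to obtain $nE_N(\rho_{AB}) \ge \lfloor rn\rfloor + \log(1-\eps_n)$, whence $E_N(\rho_{AB})\ge r$.

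The one place where you diverge from the paper is your assessment of the ``main obstacle.'' You assert that the naive bound $\|X^{T_B}\|_1 \le d\,\|X\|_1$ is useless here and therefore pass to a variational/witness formulation. In fact the paper uses exactly that naive bound, and it works: the local dimension of the output system is $d=2^{\lfloor rn\rfloor}$, so the triangle inequality gives
\[
\|\omega_n^{T_B}\|_1 \;\ge\; \bigl\|(\Phi^{2\lfloor rn\rfloor})^{T_B}\bigr\|_1 - \bigl\|(\omega_n-\Phi^{2\lfloor rn\rfloor})^{T_B}\bigr\|_1 \;\ge\; 2^{\lfloor rn\rfloor} - 2^{\lfloor rn\rfloor}\eps_n \;=\; (1-\eps_n)\,2^{\lfloor rn\rfloor},
\]
which is precisely your conclusion. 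Your witness route is the dual of this same computation: the optimal $Y$ you describe is $Y=2^{\lfloor rn\rfloor}\Phi^{2\lfloor rn\rfloor}$ (its partial transpose is the swap, so $\|Y^{T_B}\|_\infty=1$), and its operator norm $\|Y\|_\infty=2^{\lfloor rn\rfloor}$ is exactly the dimension factor you thought you were avoiding. Both routes incur the same $2^{\lfloor rn\rfloor}$ loss against $\eps_n$; the point is simply that this factor is harmless after taking $\log$ and dividing by $n$, since it matches the target's negativity.
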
 
\begin{proof}
    This proof mainly follows the proof from Ref.~\cite{Vidal2002}. 
    For $r = E_{D,\PPTq}(\rho_{AB})$, there exists a sequence of PPT quasi-operations $\{ \Lambda_{n} \}_n$ such that
\begin{equation}
    \sigma_{n, \hat{A}\hat{B}} = \Lambda_{n} \left( \rho_{AB}^{\ox n} \right),
    \quad \eps_n = \norm{ \sigma_{n, \hat{A}\hat{B}} - \Phi^{2\floor{rn}}_{\hat{A}\hat{B}} }_1 \textrm{ and } \lim_{n \to \infty} \eps_n = 0
.\end{equation}
    Then for all $n$, Theorem~\ref{thm:ent transfer} implies $E_N \left( \rho_{AB}^{\ox n } \right) \geq E_N \left( \sigma_{n, \hat{A}\hat{B}} \right)$ and hence
\begin{align}
     \norm{ \left(\rho_{AB}^{\ox n }\right)^{T_B} }_1 &\geq \norm{ \left(\sigma_{n, \hat{A}\hat{B}}\right)^{T_B} }_1
    = \norm{ \left(\Phi^{2\floor{rn}}_{\hat{A}\hat{B}} \right)^{T_B} + \left( \sigma_{n, \hat{A}\hat{B}} - \Phi^{2\floor{rn}}_{\hat{A}\hat{B}} \right)^{T_B} }_1 \\
    &\geq \norm{ \left(\Phi^{2\floor{rn}}_{\hat{A}\hat{B}} \right)^{T_B} }_1 - \norm{ \left( \sigma_{n, \hat{A}\hat{B}} - \Phi^{2\floor{rn}}_{\hat{A}\hat{B}} \right)^{T_B} }_1 \\
    &\geq \left( 1 - \eps_n \right)2^{\floor{rn}}
,\end{align}
    where the last inequality follows from the fact 
    $\norm{X^{T_B}}_1 \leq d \norm{X}_1$ for $X \in \op{Herm}\left( \cH_d \right)$. Taking $n \to \infty$ on both sides, the additivity of logarithm negativity gives 
\begin{equation}
    E_N \left( \rho_{AB} \right) 
    \geq \lim_{n \to \infty} \frac{1}{n} \log \left( 1 - \eps_n \right)2^{\floor{rn}} 
    = \lim_{n \to \infty} \frac{\floor{rn}}{n} + \frac{\log \left( 1 - \eps_n \right)}{n} 
    = r
.\end{equation}
\end{proof}

We are ready to construct the whole chain.

\renewcommand\theproposition{\ref{thm:chain inequality}}
\setcounter{proposition}{\arabic{proposition}-1}
\begin{theorem}
    For any bipartite state $\rho$, it holds that
    \begin{align}
        E_N^\tau(\rho) \le E_{C,\PPTq}(\rho) \le E_{C,\PPTq}^\exact(\rho) =
        E_{N}(\rho) = E_{D,\PPTq}^\exact(\rho) = 
        E_{D,\PPTq}(\rho). 
    \end{align}
\end{theorem}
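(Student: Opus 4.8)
The plan is to assemble the chain from pieces that are either already proven in the excerpt or reduce to the two endpoint propositions. The spine of the chain is the string of equalities
$E_{C,\PPTq}^\exact(\rho) = E_N(\rho) = E_{D,\PPTq}^\exact(\rho)$, which is precisely Theorem~\ref{thm:ent measure} and so may be invoked directly. The two ``soft'' inequalities $E_{C,\PPTq}(\rho) \le E_{C,\PPTq}^\exact(\rho)$ and $E_{D,\PPTq}^\exact(\rho) \le E_{D,\PPTq}(\rho)$ are instances of the general facts~\eqref{eq:measure EC inequality} and~\eqref{eq:measure ED inequality} (exact manipulation is more restrictive than approximate manipulation), so they are immediate. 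What remains is exactly the two outermost inequalities, $E_N^\tau(\rho) \le E_{C,\PPTq}(\rho)$ and $E_{D,\PPTq}(\rho) \le E_N(\rho)$, which I would state and prove as Proposition~\ref{prop:temp lowerbound} and Proposition~\ref{prop:ent distill <= en}, after which the theorem follows by concatenation.

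For the lower endpoint $E_N^\tau(\rho) \le E_{C,\PPTq}(\rho)$, I would first extend the tempered negativity $N_\tau$ of Lami--Regula to quasi-states (a purely formal change of domain in the definition) and record the three properties I need: the bound $\|\sigma^{T_B}\|_1 \ge N_\tau(\sigma \mid \rho)$, the robustness estimate $\|\sigma - \rho\|_1 \le \eps \implies N_\tau(\sigma \mid \rho) \ge (1-\eps)N_\tau(\rho)$, and supermultiplicativity $N_\tau(\rho^{\ox n}) \ge N_\tau(\rho)^n$. The point is that the original arguments in Ref.~\cite{Lami2023a} only use H\"older's inequality and the variational formula for the trace norm, both of which are insensitive to whether the operators are states or quasi-states; supermultiplicativity still needs $\rho$ itself to be a genuine state, which is fine since the theorem is stated for states $\rho$. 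Given these, the argument is the standard chain: if $r = E_{C,\PPTq}(\rho)$ then there are PPTq maps diluting $\Phi^{2\floor{rn}}$ to states $\sigma_n$ with $\|\sigma_n - \rho^{\ox n}\|_1 = \eps_n \to 0$; Theorem~\ref{thm:ent transfer} gives $2^{\floor{rn}} = \|(\Phi^{2\floor{rn}})^{T_B}\|_1 \ge \|\sigma_n^{T_B}\|_1 \ge N_\tau(\sigma_n \mid \rho^{\ox n}) \ge (1-\eps_n)N_\tau(\rho)^n$, and taking $\tfrac1n\log(\cdot)$ with $n \to \infty$ yields $r \ge E_N^\tau(\rho)$.

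For the upper endpoint $E_{D,\PPTq}(\rho) \le E_N(\rho)$, I would run the classic Vidal--Werner-type argument: if $r = E_{D,\PPTq}(\rho)$, pick PPTq maps distilling $\rho^{\ox n}$ to $\sigma_n$ with $\|\sigma_n - \Phi^{2\floor{rn}}\|_1 = \eps_n \to 0$; Theorem~\ref{thm:ent transfer} gives $\|(\rho^{\ox n})^{T_B}\|_1 \ge \|\sigma_n^{T_B}\|_1$, then a reverse triangle inequality on the partial transpose together with the crude bound $\|X^{T_B}\|_1 \le d\,\|X\|_1$ applied to the error term (the Bell state has $\|(\Phi^{2\floor{rn}})^{T_B}\|_1 = 2^{\floor{rn}}$) gives $\|(\rho^{\ox n})^{T_B}\|_1 \ge (1-\eps_n)2^{\floor{rn}}$; taking logs, dividing by $n$, using additivity of $E_N$ under tensor products, and letting $n \to \infty$ gives $E_N(\rho) \ge r$. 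Finally I would combine: $E_N^\tau(\rho) \le E_{C,\PPTq}(\rho) \le E_{C,\PPTq}^\exact(\rho) = E_N(\rho) = E_{D,\PPTq}^\exact(\rho) \le E_{D,\PPTq}(\rho) \le E_N(\rho)$, so the last three quantities are squeezed to be equal and the chain closes.

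The main obstacle is not any single hard estimate but the careful verification that every lemma about $N_\tau$ survives the passage from states to quasi-states — in particular making sure property~(c) is invoked only for the genuine state $\rho$ and that property~(b), which relies on a H\"older-type inequality, is stated for complex (not necessarily positive) operators. Everything else is bookkeeping: applying Theorem~\ref{thm:ent transfer} to sandwiched sequences, using $\|X^{T_B}\|_1 \le d\|X\|_1$ on the error terms, and stitching together the inequalities~\eqref{eq:measure EC inequality}, \eqref{eq:measure ED inequality} with Theorem~\ref{thm:ent measure}.
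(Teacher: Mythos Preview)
Your proposal is correct and mirrors the paper's proof essentially line for line: the same decomposition into the two endpoint propositions (tempered negativity lower bound on $E_{C,\PPTq}$ via the Lami--Regula properties extended to quasi-states, and the Vidal--Werner-type upper bound on $E_{D,\PPTq}$ via the reverse triangle inequality plus $\|X^{T_B}\|_1\le d\|X\|_1$), the same invocation of Theorem~\ref{thm:ent transfer} to pass monotonicity through the approximating sequences, and the same squeeze at the end using Theorem~\ref{thm:ent measure} and Eqs.~\eqref{eq:measure EC inequality}--\eqref{eq:measure ED inequality}. Your remark that property~(c) of $N_\tau$ is applied only to the genuine state $\rho$ is exactly the caveat the paper records in its Lemma~\ref{lem:temp properties}.
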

\renewcommand{\theproposition}{\arabic{proposition}}
\begin{proof}
    For all bipartite state $\rho$,
\begin{alignat}{2}
    E_N^\tau(\rho) 
    &\leq E_{C,\PPTq}(\rho) && \textrm{(Proposition~\ref{prop:temp lowerbound})} \\
    &\leq E_{C,\PPTq}^\exact(\rho) && \textrm{(Equation~\eqref{eq:measure EC inequality})} \\
    &= E_{N}(\rho) = E_{D,\PPTq}^\exact(\rho) \quad && \textrm{(Theorem~\ref{thm:ent measure})} \\
    &\leq E_{D,\PPTq}(\rho) && \textrm{(Equation~\eqref{eq:measure ED inequality})} \\
    &\leq E_{N}(\rho). && \textrm{(Proposition~\ref{prop:ent distill <= en})}
\end{alignat}
\end{proof}
\end{document}